\documentclass{article}
\usepackage{sarah}
\usepackage{comment}
\usepackage{xspace}
\usepackage[numbers, sort]{natbib}

\title{Incremental Voronoi diagrams}

\author{Sarah R.~Allen\thanks{Carnegie Mellon University.  Supported by NSF grants CCF-0747250, CCF-1116594, and the Graduate Research Fellowship Program under Grant No. DGE-1252522. \texttt{srallen@cs.cmu.edu}} \and Luis Barba\thanks{D\'{e}partment d'Informatique, Universit\'{e} Libre de Bruxelles, Carleton University, \texttt{lbarbafl@ulb.ac.be+}} \and John Iacono\thanks{Tandon School of Engineering, New York University, \texttt{iacono@nyu.edu}} \and Stefan Langerman\thanks{Directeur de Recherches du F.R.S.-FNRS, \texttt{stefan.langerman@ulb.ac.be}}}

\widowpenalty=0 \clubpenalty=0

\newcommand{\V}{\ensuremath{\mathcal V(S)}}
\newcommand{\Vq}{\ensuremath{\mathcal V_q(S)}}
\newcommand{\rr}[2]{\ensuremath{\textsc{cell}(#1,#2)}}
\newcommand{\rrb}[2]{\ensuremath{\partial \textsc{cell}(#1,#2)}}
\newcommand{\f}{\ensuremath{\mathcal F}}
\newcommand{\curve}{\ensuremath{\calC}}
\newcommand{\ins}[1][\calC]{\ensuremath{\textsc{in}(#1)}}
\newcommand{\out}[1][\calC]{\ensuremath{\textsc{ex}(#1)}}
\newcommand{\fleeq}[1][\curve]{\ensuremath{\calE_{#1}}}
\newcommand{\pres}[1][G, \calC]{\ensuremath{\calP(#1)}}
\newcommand{\flarb}[1][G, \fleeq]{\ensuremath{\calF(#1)}}
\newcommand{\cost}[1][G, \calC]{\ensuremath{\textsc{cost}(#1)}}
\newcommand{\bounded}[1][G, \curve]{\ensuremath{\mathcal{B}(#1)}}
\newcommand{\VD}{\textrm{VD}}
\newcommand{\calV}{\mathcal{V}}

\begin{document}

\newcommand{\shortfull}[2]{#2}

\maketitle

\begin{abstract}
We study the amortized number of combinatorial changes (edge insertions and removals) needed to update the graph structure of the Voronoi diagram $\VD(S)$ (and several variants thereof) of a set $S$ of $n$ sites in the plane as sites are added to the set. 
To that effect, we define a general update operation for planar graphs that can be used to model the incremental construction of several variants of Voronoi diagrams as well as the incremental construction of an intersection of halfspaces in $\mathbb{R}^3$. 
We show that the amortized number of edge insertions and removals needed to add a new site to the Voronoi diagram is $O(\sqrt{n})$. 
A matching $\Omega(\sqrt{n})$ combinatorial lower bound is shown, even in the case where the graph representing the Voronoi diagram is a tree. 
This contrasts with the $O(\log{n})$ upper bound of Aronov et al.~(2006) for farthest-point Voronoi diagrams in the special case where the points are inserted in clockwise order along their convex hull.

We then present a semi-dynamic data structure that maintains the Voronoi diagram of a set $S$ of $n$ sites in convex position.
This data structure supports the insertion of a new site $p$ (and hence the addition of its Voronoi cell) and finds the asymptotically minimal number $K$ of edge insertions and removals needed to obtain the diagram of $S \cup \{p\}$ from the diagram of $S$, in time $O(K\,\mathrm{polylog}\ n)$ worst case, which is $O(\sqrt{n}\;\mathrm{polylog}\ n)$ amortized by the aforementioned combinatorial result.

The most distinctive feature of this data structure is that the graph of the Voronoi diagram is maintained explicitly at all times and can be retrieved and traversed in the natural way; this contrasts with other known data structures supporting nearest neighbor queries.
Our data structure supports general search operations on the current Voronoi diagram, which can, for example, be used to perform point location queries in the cells of the current Voronoi diagram in $O(\log n)$ time, or to determine whether two given sites are neighbors in the Delaunay triangulation. 
\end{abstract}

\section{Introduction}

Let $S$ be a set of $n$ sites in the plane.  The graph structures of the Voronoi diagram $\VD(S)$ and its dual the Delaunay triangulation $\textrm{DT}(S)$ capture much of the proximity information of that set. They contain the nearest neighbor graph, the minimum spanning tree, and the Gabriel graph of $S$, and have countless applications in computational geometry, shape reconstruction, computational biology, and machine learning.

One of the most popular algorithms for constructing a Voronoi diagram inserts sites in random order, incrementally updating the diagram~\cite{de2000computational}. In that case, backward analysis shows that the expected number of changed edges in $\VD(S)$ is constant, offering some hope that an efficient dynamic---or at least semi-dynamic---data structure for maintaining $\VD(S)$ could exist.
These hopes, however, are rapidly squashed, as it is easy to construct examples where the complexity of each successively added face is $\Omega(n)$, and thus each insertion changes the position of a linear number of vertices and edges of $\VD(S)$.
The goal of this paper is to show that despite this worst-case behavior, the amortized number of \emph{structural changes} to the graph of the Voronoi diagram of $S$, that is, the minimum number of edge insertions and deletions needed to update $\VD(S)$ throughout any sequence of site insertions to $S$, is much smaller.

This might come as a surprise in light of the fact that the number of combinatorial changes (usually modeled as \emph{flips}) to the Delaunay triangulation of $S$ upon the insertion of a point can be $\Omega(n)$ with each insertion, 
even when the sites are in convex position and are added in clockwise order. (Note that in that case the Voronoi diagram of $S$ is a tree and the standard flip operation is a rotation in the tree.)

To overcome this worst-case behavior, \citet{aronov2006data} studied what happens in this specific case (points in convex position added in clockwise order) if the rotation operation is replaced by the more elementary link (add an edge) and cut (delete an edge) operations in the tree. 
They show that, in that model, it is possible to reconfigure the tree after each site insertion while performing $O(\log n)$  links and cuts, amortized; however their proof is existential and no algorithm is provided to find those links and cuts. Pettie~\cite{pettie} shows both an alternate proof of that fact using forbidden 0-1 matrices and a matching lower bound.

One important application of Voronoi diagrams is to solve \emph{nearest-neighbor} (or \emph{farthest-neighbor}) queries: given a point in the plane, find the site nearest (or farthest) to this point. In the static case, this is done by preprocessing the (nearest or farthest point) Voronoi diagram to answer point-location queries in $O(\log n)$ time.
Without the need to maintain $\VD(S)$ explicitly, the problem of nearest neighbor queries is a \emph{decomposable search problem} and can be made semi-dynamic using the standard dynamization techniques of~\citet{bentley1980decomposable}.   
The best incremental data structure supporting nearest-neighbor queries performs queries and insertions in $O(\log^2 n/ \log \log n)$ time~\cite{chiang1992dynamic,overmars1983design}. 
Recently, \citet{chan2010dynamic} developed a randomized data structure supporting nearest-neighbor queries in $O(\log^2 n)$ time, insertions in $O(\log^3 n)$ expected amortized time, and deletions in $O(\log^6 n)$ expected amortized time.

\subsection*{Flarbs}
In the mid-1980's it was observed that a number of variants of Voronoi diagrams and Delaunay triangulations using different metrics (Euclidean distance, $L_p$ norms, convex distance functions) or different kinds of sites (points, segments, circles) could all be handled using similar techniques. 
To formalize this, several abstract frameworks were defined, such as the one of Edelsbrunner and Seidel~\cite{voroarr} and the two variants of abstract Voronoi diagrams of Klein~\cite{Klein2009885, DBLP:books/sp/Klein89}. 
In this paper we define a new abstract framework to deal with Voronoi diagrams constructed incrementally by inserting new sites.

Let $G$ be a 3-regular embedded plane graph with $n$ vertices\footnote{While the introduction used $n$ for the number of sites in $S$, the combinatorial part of this article uses $n$ for the number of vertices in the Voronoi diagram. By Euler's formula, those two values are asymptotically equivalent, up to a constant factor.}. 
We seek to bound the number of edge removals and insertions needed to implement the following operation, hereafter referred to as a \emph{flarb}\footnote{Although the last two authors are honored by the flattering renaming of the flarb operation in the literature \cite{pettie}, this paper uses original terminology.}: 
Given a simple closed curve $\curve$ in the plane whose interior intersects $G$ in a connected component, split both $\curve$ and all the edges that it crosses at the point of intersection, remove every edge and vertex that lies in the interior of $\curve$, and add each curve in the subdivision of $\curve$ as a new edge; see Figure~\ref{fig:FlarbExample}.
This operation can be used to represent the insertion of new cells in different types of Voronoi diagrams. 
It can also be used to represent the changes to the 1-skeleton of a polyhedron %defined by the intersection of $n$ halfspaces 
in $\mathbb{R}^3$ after it is intersected with a halfspace.

\begin{figure}[h]
\centering
\includegraphics{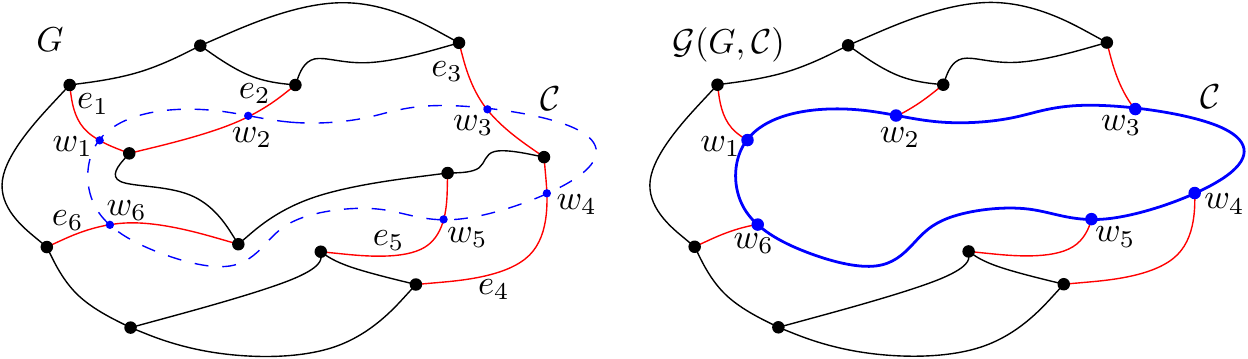}
\caption{\small The flarb operation on a graph $G$ induced by a flarbable curve $\curve$, produces a graph $\calG(G, \curve)$ with 2 more vertices. Fleeq-edges crossed by $\curve$ are shown in red.}
\label{fig:FlarbExample}
\end{figure}

\subsection*{Results}
We show that the amortized \emph{cost} of a flarb operation, where the combinatorial cost is defined to be the minimum number of edge insertions and removals needed to perform it, is $O(\sqrt{n})$. We also show a matching lower bound: some sequences of flarbs require $\Omega(\sqrt{n})$ links and cuts per flarb, even when the graph is a tree (or more precisely a Halin graph---a tree with all leaves connected by a cycle to make it 3-regular). This contrasts with the $O(\log{n})$ upper bound of \citet{aronov2006data} for the Voronoi diagram of points in convex position (also a tree) when points are added in clockwise order.

We complement these combinatorial bounds with an algorithmic result. 
We present an output-sensitive data structure that maintains the nearest-~or farthest-point Voronoi diagram of a set $S$ of $n$ points in convex position as new points are added to $S$. Upon an insertion, the data structure finds the minimum (up to within a constant factor) number $K$ of edge insertions and deletions necessary to update the Voronoi diagram of $S$.

The running time of each insertion is $O(K \log^7 n)$, and by our combinatorial bounds, $K=O(\sqrt{n})$. This solves the open problem posed by~\citet{aronov2006data}.

The distinguishing feature of this data structure is that it explicitly maintains the graph structure of the Voronoi diagram after every insertion, a property that is not provided by any nearest neighbor data structure that uses decomposable searching problem techniques. 
Further, the data structure also maintains the Voronoi diagram in a \emph{grappa tree} \cite{aronov2006data}, a variant of the link-cut tree of Sleator and Tarjan \cite{sleator1983data} which allows a powerful query operation called \emph{oracle-search}. Roughly speaking, the oracle-search query has access to an oracle specifying a vertex to find. Given an edge of the tree, the oracle determines which of the two subtrees attached to its endpoints contains that vertex. Grappa trees use $O(\log n)$ time and oracle calls to find the sought vertex. A grappa tree is in some sense a dynamic version of the centroid decomposition for trees, which is used in many algorithms for searching in Voronoi diagrams. 
Using this structure, it is possible to solve a number of problems for the set $S$ at any moment during the incremental construction, for example:
\begin{itemize}
\item Given sites $p$ and $q$,  report whether they are connected by a Delaunay edge in $O(\log n)$ time.
\item Given a point $q$, find the Voronoi cell containing $q$ in $O(\log{n})$ time. This not only gives the nearest neighbor of $q$, but a pointer to the explicit description of its cell.
\item Find the smallest disk enclosing $S$, centered on a query segment $[pq]$, in $O(\log n)$ time \cite{enclosingdisk_CCCG08}.
\item Find the smallest disk enclosing $S$, centered on a query circle $C$, in $O(\log n)$ time \cite{DiskConstrainedOneCenterQueries}.
\item Given a convex polygon $P$ (counterclockwise array of its $m$ vertices), find the smallest disk enclosing $S$ and excluding $P$ in $O(\log n + \log m)$ time \cite{NewCircleSeparability}.
\end{itemize}

The combinatorial bound for Voronoi diagrams also has direct algorithmic consequences, the most important being that it is possible to store all versions of the graph throughout a sequence of insertions using persistence in $O(n^{3/2})$ space.
Since the entire structure of the graph is stored for each version, this provides a foundation for many applications that, for instance would require searching the sequence of insertions for the moment during which a specific event occurred. 
 
\subsection*{Outline}
The main approach used to bound the combinatorial cost of a flarb is to examine how the complexity of the faces changes. Notice that faces whose size remains the same do not require edge insertions and deletions. The other faces either grow or shrink, and a careful counting argument reveals that the cost of a flarb is at most the number faces that shrink (or disappear)  upon execution of the flarb (Section~\ref{sec: The flarb operation}). By using a potential function that sums the sizes of all faces, the combinatorial cost of shrinking faces is paid for by the reduction of their potential.
To avoid incurring a high increase in potential for a large new face, the potential of each face is capped at $\sqrt{n}$. Then at most $O(\sqrt{n})$ large faces can shrink without changing potential and are accounted for separately (Section~\ref{sec:Combinatorial Bound}).
The matching $\Omega(\sqrt{n})$ lower bound is presented in Section~\ref{sec:Lower Bound}, and Section~\ref{sec:Computing the flarb} presents the data structure for performing flarbs for the Voronoi diagrams of points in convex position.

\section{The flarb operation}\label{sec: The flarb operation}

In this section we formalize the flarb operation that models the insertion of new sites in Voronoi diagrams and present a preliminary analysis of the cost of a flarb. 

Let $G=(V,E)$ be a planar 3-regular graph embedded in $\mathbb{R}^2$ (not-necessarily with a straight-line embedding).
Let $\curve$ be a simple closed Jordan curve in the plane. Define $\ins$ to be the set of vertices of $G$ that lie in the interior of $\curve$ and let $\out = V\setminus \ins$.
We say that $\curve$ is \emph{flarbable} for $G$ if the following conditions hold:
\shortfull{}{\begin{enumerate}}
\shortfull{(1)}{\item} the graph induced by $\ins$ is connected, 
\shortfull{(2)}{\item} $\curve$ intersects each edge of $G$ either at a single point or not at all,
\shortfull{(3)}{\item} $\curve$ passes through no vertex of $G$, and
\shortfull{(4)}{\item} the intersection of $\curve$ with each face of $G$ is path-connected. 
\shortfull{}{\end{enumerate}}

In the case where the graph $G$ is clear from context, we simply say that 
$\curve$ is flarbable.
The \emph{fleeq} of $\curve$ is the circular sequence $\fleeq = e_1, \ldots, e_k$ of edges in $E$ that are crossed by $\curve$; we call the edges in $\fleeq$ \emph{fleeq-edges}. A face whose interior is crossed by $\curve$ is called a \emph{$\curve$-face}.
We assume without loss of generality that $\curve$ is oriented clockwise and that the edges in $\fleeq$ are ordered according to their intersection with $\curve$.
Given a flarbable curve~$\curve$ on $G$, we present the following definition.

\begin{definition} \label{def: flarb}
 For a planar graph $G$ and a curve $\curve$ that is flarbable for $G$, we define a \emph{flarb} operation $\flarb$ which produces a new 3-connected graph $\calG(G, \curve)$ as follows (see Figure~\ref{fig:FlarbExample} for a depiction):
\shortfull{}{\begin{enumerate}}
\shortfull{(1)}{\item} For each edge $e_i = (u_i, v_i)$ in $\fleeq$ such that $u_i \in \ins$ and $v_i \in \out$, create a new vertex $w_i = \curve\cap e_i$ and connect it to $v_i$ along $e_i$.  
\shortfull{(2)}{\item}
For each pair $e_i, e_{i+1}$ of successive edges in $\fleeq$, create a new edge $(w_i, w_{i+1})$ between them along $\curve$. We call $(w_i, w_{i+1})$ a \emph{$\curve$-edge} (all indices are taken modulo $k$).
\shortfull{(3)}{\item}
Delete all vertices of $\ins$ along with their incident edges.
\shortfull{}{\end{enumerate}}
\end{definition}

\begin{lemma}\label{lemma:Net change in vertices after flarb}
For each flarbable curve $\curve$ on a 3-regular planar graph $G$, $\calG(G, \curve)$ has at most 2 more vertices than $G$ does.
\end{lemma}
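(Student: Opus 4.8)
The plan is to account exactly for the vertices created and destroyed by a single flarb. By Definition~\ref{def: flarb}, a flarb creates one new vertex $w_i$ for each fleeq-edge $e_i$ and destroys every vertex of $\ins$, and it does nothing else to the vertex set. Since the points $w_i = \curve \cap e_i$ are interior points of \emph{distinct} edges of the plane graph $G$, they are pairwise distinct, so the flarb creates exactly $|\fleeq|$ vertices and destroys exactly $|\ins|$ vertices. Writing $k = |\fleeq|$ and $m = |\ins|$, the net change in the number of vertices is $k-m$, so it suffices to prove $k \le m+2$.

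To that end I would first pin down which edges are fleeq-edges. By flarbability conditions (2) and (3), $\curve$ meets each edge of $G$ in at most one interior point; so by the Jordan curve theorem an edge is crossed by $\curve$ --- i.e.\ lies in $\fleeq$ --- if and only if exactly one of its endpoints lies in $\ins$ (an edge with both endpoints in $\ins$, or both in $\out$, would have to cross $\curve$ an even number of times, hence, by condition (2), not at all). Thus $\fleeq$ is precisely the set of edges of $G$ with one endpoint in $\ins$ and one in $\out$, which also confirms that step (1) of the definition produces a $w_i$ for every fleeq-edge, accounting for all $k$ of the new vertices.

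Next I would count edge–endpoints at vertices of $\ins$. Since $G$ is $3$-regular this count is $3m$. Letting $E_{\mathrm{in}}$ denote the number of edges of $G$ with both endpoints in $\ins$, the same count equals $2E_{\mathrm{in}} + k$, because each internal edge contributes two endpoints and each fleeq-edge exactly one. Hence $k = 3m - 2E_{\mathrm{in}}$. Finally, flarbability condition (1) says the subgraph of $G$ induced by $\ins$ is connected, so for $m \ge 1$ it has at least $m-1$ edges, i.e.\ $E_{\mathrm{in}} \ge m-1$; plugging in gives $k \le 3m - 2(m-1) = m+2$. (If $m=0$ then no edge has an endpoint in $\ins$, so $k=0$ and the bound is immediate.) Therefore $k - m \le 2$, which is the claim.

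The only step that takes any care is the characterization of $\fleeq$ from the flarbability hypotheses; everything after that is a two-line degree count combined with the elementary fact that a connected graph on $m$ vertices has at least $m-1$ edges, so I do not anticipate a genuine obstacle.
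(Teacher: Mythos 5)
Your proof is correct and follows essentially the same approach as the paper: both arguments boil down to the 3-regularity of $G$, the connectivity of the subgraph induced on $\ins$, and the fact that a connected graph on $m$ vertices has at least $m-1$ edges, yielding $|\fleeq| \le |\ins| + 2$. The paper packages this via the auxiliary graph $T$ on $\ins \cup \{w_1,\dots,w_k\}$ (with the $w_i$ as degree-$1$ leaves), while you count degrees directly in $G$ and add an explicit Jordan-curve justification that $\fleeq$ consists precisely of the edges with one endpoint in $\ins$; these are cosmetic differences, not a different route.
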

\begin{proof}
Let $\fleeq = e_1, \ldots, e_k$ be the fleeq of $\curve$ and let $f$ be the new face in $\calG(G, \curve)$ that is bounded by $\curve$ and created by the flarb operation $\flarb$.
Notice that the vertices of $f$ are the points $w_1, \ldots, w_k$ along edges $e_1, \ldots, e_k$, where $w_i = \curve\cap e_i$.
Since $\curve$ is flarbable, the subgraph induced by the vertices of $\ins\cup \{w_1, \ldots, w_k\}$ is also a connected graph $T$ with $w_1, \ldots, w_k$ as its leaves and every other vertex of degree 3; see Figure~\ref{fig:FlarbExample}.  
Therefore $T$ has at least $k-2$ internal vertices.
The flarb operation adds $k$ vertices, namely $w_1, \ldots, w_k$, and the internal vertices of $T$ are deleted. Therefore, the net increase in the number of vertices is at most 2.
\end{proof}

\shortfull{Note that at most two new vertices are created (see appendix).}{}
Since each newly created vertex has degree three and all remaining vertices are unaffected, the new graph is 3-regular. 
In other words, the flarb operation $\flarb$ creates a cycle along $\curve$ and removes the portion of the graph enclosed by $\curve$. 
Note that for any point set in general position (no four points lie on the same circle), its Voronoi diagram is a 3-regular planar graph, assuming we use the line at infinity to join the endpoints of its unbounded edges in clockwise order. Therefore, a flarb can be used to represent the changes to the Voronoi diagram upon insertion of a new site.

\begin{observation}\label{lemma:Voronoi insertion is flarbable}
Given a set $S$ of points in general position, let $\calV(S)$ be the graph of the Voronoi diagram of $S$. 
For a new point $q$, there exists some curve $\curve^q_S$ such that $\calG(\calV(S), \curve^q_S) = \calV(S\cup \{q\})$; namely, $\curve^q_S$ is the boundary of the Voronoi cell of $q$ in $\calV(S\cup \{q\})$.
\end{observation}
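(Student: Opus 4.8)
The plan is to verify that the Voronoi cell boundary of the new site $q$ satisfies the four conditions in the definition of flarbability with respect to the graph $\calV(S)$, and then to check that performing the flarb along that curve produces exactly $\calV(S\cup\{q\})$. Set $\curve^q_S = \partial \rr{q}{S\cup\{q\}}$, the boundary of the Voronoi cell of $q$ in the new diagram. The key geometric fact driving everything is that the cell of $q$ in $\calV(S\cup\{q\})$ is the locus of points closer to $q$ than to any site of $S$, i.e., $\rr{q}{S\cup\{q\}} = \bigcap_{s\in S}\{x : \|x-q\| \le \|x-s\|\}$, an intersection of halfplanes, hence a (possibly unbounded) convex polygon; its boundary is therefore a simple closed curve (closing up through the line at infinity in the unbounded case, consistent with the convention stated in the excerpt).

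First I would establish condition (1): the vertices of $\calV(S)$ lying strictly inside $\curve^q_S$ are precisely the Voronoi vertices of $S$ that are ``stolen'' by $q$, and the subgraph they induce is connected. This follows because for any site $s\in S$, the region $\rr{s}{S} \setminus \rr{s}{S\cup\{q\}}$ — the part of $s$'s old cell absorbed into $q$'s new cell — is the intersection of the convex cell $\rr{s}{S}$ with the halfplane closer to $q$ than to $s$, hence connected; taking the union over all affected $s$ gives the interior of $\curve^q_S$, and adjacency of old cells that all touch $\rr{q}{}$ gives connectivity of the induced subgraph on interior vertices. Conditions (2) and (3): since $\curve^q_S$ is the boundary of a cell in a diagram in which all sites (including $q$) are in general position, and a Voronoi vertex of $S$ cannot lie on $\curve^q_S$ (that would put four sites of $S\cup\{q\}$ on a common circle), $\curve^q_S$ passes through no vertex of $\calV(S)$; and each edge of $\calV(S)$, being a segment of a bisector line, meets the convex region $\rr{q}{}$ in a (possibly empty) subsegment, so it crosses $\curve^q_S$ at most twice — and exactly-twice crossings are ruled out because both endpoints of such an edge would then lie outside $\curve^q_S$ while the edge dips into a convex region, contradicting convexity unless... here I would argue more carefully that an edge of $\calV(S)$ with both endpoints outside $\rr{q}{}$ cannot cross $\partial \rr{q}{}$ at all, since otherwise a subsegment of that Voronoi edge lies in $\rr{q}{}$, meaning those points are closer to $q$ than to the two sites defining the edge, yet they are equidistant to those two sites and lie on the diagram of $S$, which is fine — the real point is just that the intersection of a line with a convex set is connected, so the crossing count with the boundary is at most two, and the flarbability definition as used here tolerates this; rereading, condition (2) demands a single point or none, so I must rule out the two-point case, which I do by noting that if a Voronoi edge $e$ of $S$ crossed $\partial\rr{q}{}$ twice then its middle portion lies in $\rr{q}{}$, but then in $\calV(S\cup\{q\})$ that portion is no longer on any bisector between two sites of $S$ — it is interior to $q$'s cell — while the two end portions survive; that is exactly the ``split an edge'' situation and is the generic picture, so in fact two crossings per edge does occur and I should instead read condition (2) as applying after we note crossings are transversal; I will resolve this by observing that we may take $\curve^q_S$ to be the boundary curve and that each \emph{maximal subsegment} of a Voronoi edge inside versus outside is what matters — cleanly, each Voronoi edge of $S$ that meets the open region $\out[\curve^q_S]$-complement does so in one subsegment, so it is crossed by $\curve^q_S$ in either zero, one, or two points, and I will handle the two-point case by splitting such an edge at both points, which the flarb formalism accommodates since it splits each crossed edge at its crossing points; alternatively, and more cleanly, I will simply cite condition (4) to organize this. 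Condition (4): the intersection of $\curve^q_S$ with a face $\rr{s}{S}$ of $\calV(S)$ is $\partial\rr{q}{} \cap \rr{s}{S}$, and since both $\rr{q}{}=\mathrm{in}(\curve^q_S)$ and $\rr{s}{S}$ are convex, the boundary of one intersected with the other is path-connected (it is a single arc of the boundary of the convex region $\rr{q}{}$). This last convexity observation is also what cleanly forces each Voronoi edge of $S$ to be crossed transversally in the expected way.

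Having checked flarbability, I would finish by identifying $\calG(\calV(S),\curve^q_S)$ with $\calV(S\cup\{q\})$ edge by edge: the new $\curve$-edges introduced by the flarb are exactly the edges of $\partial\rr{q}{S\cup\{q\}}$, i.e., the portions of the bisectors between $q$ and the various surviving sites $s$; the vertices $w_i$ created on crossed edges are exactly the new Voronoi vertices equidistant from $q$ and two sites of $S$; the deleted interior part is exactly the union of the absorbed sub-regions $\rr{s}{S}\setminus\rr{s}{S\cup\{q\}}$; and every face, edge, and vertex of $\calV(S)$ outside $\curve^q_S$ is untouched in $\calV(S\cup\{q\})$ because the nearest-site relation there is unaffected by adding $q$. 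By Lemma~\ref{lemma:Net change in vertices after flarb} this is consistent (two new vertices in the generic case), and since $\calV(S\cup\{q\})$ is 3-regular under the stated line-at-infinity convention, both sides are 3-regular plane graphs with the same embedding, hence equal.

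The main obstacle I anticipate is the bookkeeping around condition (2) together with the treatment of unbounded cells: I need to argue carefully that ``crossed at a single point'' is the right reading once we allow splitting, and that the closing-up at the line at infinity genuinely turns $\partial\rr{q}{}$ into a simple closed Jordan curve and $\calV(S\cup\{q\})$ into a 3-regular graph even when $q$'s cell is unbounded — degenerate configurations (the new site on the convex hull, bisectors parallel to the line at infinity) must be checked not to violate the general-position hypothesis, but since $S\cup\{q\}$ is assumed in general position these are precisely excluded. Everything else is a direct consequence of the convexity of Voronoi cells as intersections of halfplanes.
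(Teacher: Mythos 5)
The paper states this as an unproved \emph{Observation}, so there is no official proof to compare against; your approach---verify the four flarbability conditions for $\curve^q_S=\partial\,\rr{q}{S\cup\{q\}}$ and then match up the flarb output with $\calV(S\cup\{q\})$---is the right shape for a proof. Conditions (1), (3), (4) and the final identification are handled soundly.

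The genuine gap is condition (2), where you talk yourself out of the correct answer. You start toward the right claim (``an edge of $\calV(S)$ with both endpoints outside $\rr{q}{}$ cannot cross $\partial\rr{q}{}$ at all''), fail to finish it, and then reverse yourself: you conclude that a Voronoi edge of $\calV(S)$ can indeed cross $\partial\rr{q}{}$ twice and that ``the flarb formalism accommodates'' a double crossing by splitting at both points. Both of those conclusions are wrong. Definition~\ref{def: flarb} step~1 only applies to a fleeq-edge with one endpoint in $\ins$ and one in $\out$; an edge crossed twice has both endpoints in $\out$ and simply cannot be processed by the definition, and condition~(2) is exactly the hypothesis ruling it out---it is not slack you can reinterpret. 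More importantly, the double crossing never happens, and your appeal to ``intersection of a line with a convex set is connected'' (giving up to two boundary crossings) is too weak a tool; you need the following sharper fact. Let $e$ be a Voronoi edge of $\calV(S)$ on the bisector of $s_1,s_2$. Parameterize the bisector by $t$ and place $s_1=(-1,0)$, $s_2=(1,0)$, so the bisector is $\{(0,t)\}$. A point $(0,t)$ lies in $\rr{q}{S\cup\{q\}}$ iff $q=(q_1,q_2)$ satisfies $q_1^2+(q_2-t)^2<1+t^2$, i.e.\ $q_1^2+q_2^2-2q_2t<1$, which is \emph{linear} in $t$. Hence the set of $t$ on the bisector where $q$ wins is a half-line (or all of $\mathbb{R}$, or empty), and its intersection with the interval $e$ is always a prefix or suffix of $e$---never a middle subsegment with both endpoints of $e$ outside. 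Therefore $\partial\rr{q}{}$ crosses each edge of $\calV(S)$ at most once, which is condition~(2) as written. You should replace the ``two-crossing'' detour with this one-line linearity argument; as is, the proof asserts a false geometric fact and a false reading of the flarb definition.
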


More generally, convex polytopes defined by the intersection of halfspaces in $\mathbb{R}^3$ behave similarly: the intersection of a new halfspace with a convex polytope modifies the structure of its 1-skeleton by adding a new face. This structural change can be implemented by performing a flarb operation in which the flarbable curve consists of the boundary of the new face.

\subsubsection*{Preserved faces and edges}
\begin{definition} \label{def:preserved}
Given a $\curve$-face $f$ of $G$,
the \emph{modified face} of $f$ is the face $f'$ of $\calG(G, \curve)$ that coincides with $f$ outside of $\curve$.
In other words, $f'$ is the face that remains from $f$ after performing the flarb $\flarb$.
We say that a $\curve$-face $f$ is \emph{preserved} (by the flarb $\flarb$) if $|f| = |f'|$. 
Moreover, we say that each edge in a preserved face is \emph{preserved} (by $\flarb$).
Denote by $\pres$ the set of faces preserved by $\flarb$ and let $\bounded$ be the set of faces wholly contained in the interior of $\curve$.
\end{definition}
Since a preserved $\curve$-face bounded by two fleeq-edges $e_i$ and $e_{i+1}$ has the same size before and after the flarb, 
there must be an edge $e$ of $G$ connecting $e_i$ with $e_{i+1}$ which is replaced by a $\curve$-edge $e^*$ after the flarb. 
In this case, we say that the edge $e$ \emph{reappears} as $e^*$.

The following auxiliary lemma will help us bound the number of operations needed to produce the graph $\calG(G, \curve)$, and follows directly from the Euler characteristic of connected planar graphs:

\begin{lemma}\label{lemma:size of graphs with holes}
Let $H$ be a connected planar graph with vertices of degree either 1, 2 or 3. For each $i\in \{1, 2, 3\}$, let $\delta_i$ be the number of vertices of $H$ with degree $i$. 
Then, $H$ has exactly $2 \delta_1 + \delta_2 + 3F_H-3$ edges, where $F_H$ is the number of bounded faces of $H$.
\end{lemma}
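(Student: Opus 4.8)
The statement is a pure Euler's-formula computation, so the plan is to reduce $H$ to a graph where Euler's formula in its standard form $V - E + F = 2$ applies cleanly (connected planar graph, $F$ counting all faces including the outer one), and then use the handshake lemma to eliminate $V$ in favor of the $\delta_i$ and $E$. First I would write $V = \delta_1 + \delta_2 + \delta_3$ for the vertex count. Next, the handshake lemma gives $2E = \sum_{v} \deg(v) = \delta_1 + 2\delta_2 + 3\delta_3$, so $\delta_3 = \tfrac{1}{3}(2E - \delta_1 - 2\delta_2)$, and substituting back, $V = \delta_1 + \delta_2 + \tfrac{1}{3}(2E - \delta_1 - 2\delta_2) = \tfrac{2}{3}\delta_1 + \tfrac{1}{3}\delta_2 + \tfrac{2}{3}E$.

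Then I would invoke Euler's formula. Since $H$ is connected and planar, $V - E + (F_H + 1) = 2$, where the ``$+1$'' accounts for the unbounded face not counted in $F_H$; equivalently $V - E + F_H = 1$. Plugging in the expression for $V$:
\[
\tfrac{2}{3}\delta_1 + \tfrac{1}{3}\delta_2 + \tfrac{2}{3}E - E + F_H = 1,
\]
which simplifies to $\tfrac{2}{3}\delta_1 + \tfrac{1}{3}\delta_2 - \tfrac{1}{3}E + F_H = 1$. Multiplying through by $3$ and solving for $E$ gives $E = 2\delta_1 + \delta_2 + 3F_H - 3$, as claimed.

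I do not anticipate a genuine obstacle here, since everything is a one-line substitution once the right bookkeeping is fixed. The only point requiring a little care is the face-counting convention: the lemma writes $F_H$ for the number of \emph{bounded} faces, so I must be consistent in adding back the outer face when applying the textbook form of Euler's formula, and I should note that this usage is meant to match the later application where $H$ is a subgraph sitting inside the curve $\curve$ (so the ``outer'' region — the part of the plane outside $H$, including the exterior of $\curve$ — is deliberately excluded from the count). A secondary subtlety is that Euler's formula requires connectedness, which is given as a hypothesis; if $H$ had several components the formula would pick up an extra term per component, so I would simply remark that connectedness is exactly what makes the constant $-3$ (rather than $-3c$) correct.
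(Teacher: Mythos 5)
Your computation is correct and is exactly the Euler-formula-plus-handshake argument the paper has in mind; the paper in fact omits the proof entirely, saying only that the lemma ``follows directly from the Euler characteristic of connected planar graphs.'' Your attention to the convention that $F_H$ counts only bounded faces, and to why connectedness matters, fills in precisely the bookkeeping the paper leaves to the reader.
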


\subsection{Combinatorial cost of a flarb}
Given a 3-regular graph $G = (V, E)$ and a flarbable curve $\curve$ we want to analyze the number of structural changes that $G$ must undergo to perform $\flarb$.
To this end, we define the combinatorial \emph{cost} of $\flarb$, denoted by $\cost$, to be the minimum number of links and cuts needed to transform $G$ into $\calG(G, \curve)$ (note that the algorithm may not implement the flarb operation according to the procedure described in Definition~\ref{def: flarb}).
We assume that any other operation has no cost and is therefore not included in the cost of the flarb. 

Consider the fleeq $\fleeq = e_1, \ldots, e_k$ and the $\curve$-edges created by $\flarb$.  
Let $e$ be an edge adjacent to some $e_i$ and $e_{i+1}$ that reappears as the $\curve$-edge $e^*$.
Notice that we can obtain $e^*$ without any links or cuts to $G$: simply shrink $e_i$ and $e_{i+1}$ so that their endpoints in $\ins$ now coincide with their intersections with $\curve$. Then modify $e$ to coincide with the portion of $\curve$ connecting the new endpoints of $e_i$ and $e_{i+1}$.
Using this \emph{preserving operation}, we obtain the $\curve$-edge $e^*$ with no cost to the flarb.
Intuitively, preserved edges are cost-free in a flarb while non-preserved edges have a nonzero cost. This notion is formalized in the following lemma.

\begin{lemma}\label{lemma:The cost of a flarb}
For a flarbable curve $\curve$, $$(|\fleeq| + |\bounded| - |\pres|)/2 \leq \cost  \leq 4|\fleeq| + 3|\bounded| - 4|\pres|.$$
\end{lemma}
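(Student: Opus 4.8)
The plan is to bound $\cost$ from both sides by relating links and cuts to the combinatorial quantities $|\fleeq|$, $|\bounded|$, and $|\pres|$ through Lemma~\ref{lemma:size of graphs with holes}.

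\textbf{Upper bound.} I would first exhibit an explicit algorithm and count its links and cuts. The idea is to do the preserving operation (described just before the lemma) on every preserved $\curve$-face for free, and pay only for the rest. Concretely: for each fleeq-edge $e_i=(u_i,v_i)$ with $u_i\in\ins$, cut it near $v_i$ to detach the interior portion — this also produces the new vertex $w_i$ attached to $v_i$ (step (1) of Definition~\ref{def: flarb}) at no extra cost since truncating an edge is free, but severing it from $u_i$ is one cut; that is at most $|\fleeq|$ cuts. Then delete the connected interior chunk $T$ (the tree from the proof of Lemma~\ref{lemma:Net change in vertices after flarb}): once detached from all fleeq-edges, $T$ is a connected planar graph with the $w_i$ removed, so it has $O(|\fleeq|+|\bounded|)$ edges by Lemma~\ref{lemma:size of graphs with holes} (here $\delta_1$ is governed by the former attachment points, $\delta_2=0$, and $F_H$ by faces of $G$ strictly inside $\curve$, i.e.\ $|\bounded|$), hence $O(|\fleeq|+|\bounded|)$ cuts. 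Finally, for each non-preserved $\curve$-face we link in the appropriate $\curve$-edge: a preserved $\curve$-face needs no link (reappearance via the preserving operation), so the number of links is at most $|\fleeq| - |\pres|$, plus possibly the constant number of links needed to attach the $\le 2$ genuinely new vertices. Bookkeeping these three contributions and absorbing constants gives the claimed $4|\fleeq| + 3|\bounded| - 4|\pres|$; the exact coefficients come from being slightly careful about which $\curve$-edges, fleeq-edge truncations, and interior edges are charged, and I would tune the algorithm so the arithmetic lands on those constants rather than merely $O(\cdot)$.

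\textbf{Lower bound.} Here I would argue that any transformation from $G$ to $\calG(G,\curve)$ must cut or link at least $(|\fleeq| + |\bounded| - |\pres|)/2$ edges, by a counting argument on edges that are present in exactly one of $G$, $\calG(G,\curve)$. Every edge strictly inside $\curve$ (there are $\Theta(|\bounded| + |\fleeq|)$ of them by Lemma~\ref{lemma:size of graphs with holes} applied to the interior subgraph $T$) is destroyed, and every $\curve$-edge bounding a \emph{non}-preserved $\curve$-face is created fresh; an edge that is cut accounts for the removal of one old edge and an edge that is linked accounts for the creation of one new edge, so $\cost$ is at least the number of old edges not surviving plus new edges not pre-existing, up to the factor $2$ coming from the fact that a single link or cut can only change the incidence structure so much. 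The key identity is: number of interior edges of $T$ $= |\fleeq| + |\bounded| - (\text{something})$ via Lemma~\ref{lemma:size of graphs with holes}, and number of non-preserved $\curve$-faces $= |\fleeq| - |\pres|$ (since $\curve$-faces are in bijection with fleeq-edges, each $\curve$-face separated by consecutive $e_i,e_{i+1}$). Combining these and dividing by $2$ yields the stated lower bound.

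\textbf{Main obstacle.} The routine parts are the two applications of Lemma~\ref{lemma:size of graphs with holes} to the interior subgraph; the delicate part is the lower bound, specifically making rigorous the claim that "one link or cut changes at most a bounded amount of structure," i.e.\ that a cut can only account for one disappearing edge and a link for one appearing edge, \emph{and} that the preserving operation genuinely requires no operation — one must be sure that an edge which merely changes geometric shape but keeps its endpoints is not forced to be cut and relinked in \emph{every} valid transformation. I expect to handle this by defining the cost purely combinatorially (on the abstract graph, forgetting the embedding and the curve) so that "reappears" literally means "same pair of endpoints in $G$ and in $\calG(G,\curve)$," after which the lower bound becomes a clean statement about symmetric difference of edge sets and the factor of $2$ is transparent. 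Getting the constants in the upper bound to match exactly $4$ and $3$ will require one careful pass through the algorithm, but is otherwise mechanical.
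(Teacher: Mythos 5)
Your upper bound plan is in the same spirit as the paper's: build the induced interior subgraph, bound its edge count via Lemma~\ref{lemma:size of graphs with holes}, cut what must go, and relink. But your specific procedure is internally inconsistent: you first announce that the preserving operation is applied for free to every preserved $\curve$-face, and then in the concrete step you cut \emph{every} fleeq-edge ``near $v_i$'' at a cost of $|\fleeq|$ cuts. There is no such partial-cut primitive in the model --- a cut removes an edge outright --- and cutting a preserved fleeq-edge destroys the free preserving operation you just invoked, forcing a relink you did not budget. The paper instead forms the induced subgraph $\calG_\curve$ (vertices of $\ins$ plus the outer endpoints of fleeq-edges), cuts only its \emph{non-preserved} edges, and shows that preserved faces supply at least $2|\pres|$ preserved edges unique to them, yielding exactly the $-2|\pres|$ and then $-4|\pres|$ after links. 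You also omit the degenerate case the paper isolates at the start --- two preserved faces sharing a non-fleeq edge --- which is precisely what makes the ``$2|\pres|$ unique preserved edges'' count valid in the remaining case; without that case split, the counting does not go through.

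The lower bound is where your route diverges genuinely, and where the gap you yourself flag is real. You want to count the symmetric difference of edge sets of $G$ and $\calG(G,\curve)$, which requires first deciding, for each fleeq-edge $e_i=(u_i,v_i)$ and its post-flarb version $(w_i,v_i)$, and for each reappearing edge $e$ vs.\ $e^*$, whether these are ``the same edge.'' Your proposed fix --- declare cost purely combinatorial so ``reappears'' means ``same unordered pair of endpoints'' --- does not match Definition~\ref{def: flarb}, which explicitly creates $w_i$ as a \emph{new} vertex, so under a literal reading \emph{no} edge incident to $\ins$ survives and the symmetric difference overshoots wildly; under the opposite convention ($w_i$ identified with $u_i$) you then need an argument that any minimizing link/cut sequence can be assumed to perform these identifications, which you have not supplied. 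The paper sidesteps all of this by arguing on \emph{faces}, not edges: a non-preserved face $f$ has $|f|\neq|f'|$, so the multiset of boundary edges of $f'$ cannot equal that of $f$ for cardinality reasons alone (no identification of edges across the transformation is needed), hence at least one edge on $\partial f$ is linked or cut; there are $|\fleeq|+|\bounded|$ relevant faces, all but $|\pres|$ non-preserved, and each edge bounds at most two faces, giving the factor $1/2$. That cardinality argument is the idea your plan is missing, and it is what makes the lower bound clean rather than delicate.
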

\begin{proof}
For the upper bound, we describe a construction of $\calG(G, \curve)$ from $G$ using at most $|\fleeq| + 3|\bounded| - 4|\pres$ links and cuts\footnote{We caution the reader that while this construction is algorithmic in nature, it is used purely to provide an upper-bound and does not reflect the behavior of the algorithm presented in Section~\ref{sec:Computing the flarb} that gives our desired runtime.}.
Consider the subgraph $\calG_\curve$ induced by $\ins \cup \{v : v\textrm{ is an endpoint of some edge in } \fleeq \}$. 
Since~$\curve$ is flarbable,  $\calG_\curve$ is a connected  graph such that each vertex of $\ins$ has degree 3 while the endpoints of the fleeq-edges outside of $\curve$ have degree 1. 
Note that if two preserved faces share a non-fleeq edge $e$, then there are four neighbors of the endpoints of $e$ that lie outside of $\curve$. 
Since $\calG_\curve$ is connected, $e$ and its four adjacent edges define the entire graph $\calG_\curve$ and the bound holds trivially. Therefore, we assume that no two preserved faces share a non-fleeq-edge from this point forward.

Note that the bounded faces of $\calG_\curve$ are exactly the bounded faces in $\bounded$.
Since $\calG_\curve$ has $|\fleeq|$ vertices of degree 1, no vertices of degree 2, and $|\bounded|$ bounded faces, by Lemma~\ref{lemma:size of graphs with holes}, $\calG_\curve$ has at most $2|\fleeq| + 3|\bounded|$ edges.
Every edge of $\calG_\curve$ that is not preserved is removed with a cut operation (isolated vertices will be removed afterwards). 
Note that each preserved face contains at least three preserved edges: two fleeq-edges and a third edge of $G$. Based on the assumption that no two preserved faces share a non-fleeq-edge, the third edge is not double counted, while the fleeq-edges may be counted at most twice. Therefore, each preserved face contributes at least two preserved edges that are specific to that face, meaning that a total of at most $2|\fleeq| + 3|\bounded|- 2|\pres|$ cut operations are performed. 
Note that each non-preserved fleeq-edge has been cut and will need to be reintroduced later to obtain $\calG(G, \curve)$.

Recall that no edge bounding a preserved face has been cut.
For each preserved face, perform a preserving operation on it which requires no link or cut operation. Since no two preserved faces share a non-fleeq edge, all the $\curve$-edges bounding the preserved faces are added without increasing \cost.
To complete the construction of $\calG(G, \curve)$, create each fleeq-edge that is not preserved and then add the remaining $\curve$-edges bounding non-preserved $\curve$-faces.
Because at least $|\pres|$ fleeq-edges were preserved, at most $|\fleeq| - |\pres|$ fleeq-edges must be reintroduced.
Moreover, since only $|\fleeq| - |\pres|$ $\curve$-faces are not preserved, we need to create at most $|\fleeq| - |\pres|$ $\curve$-edges. 
Therefore, this last step completes the flarb and construct $\calG(G, \curve)$ using a total of at most $2|\fleeq| - 2|\pres|$ link operations. 
Consequently, the total number of link and cuts needed to obtain $\calG(G, \curve)$ from $G$ is at most $4|\fleeq| + 3|\bounded| - 4|\pres|$ as claimed. 

To show that $\cost > (|\fleeq| + |\bounded| - |\pres|)/2$, simply note that in every non-preserved $\curve$-face, the algorithm needs to perform at least one cut, either to augment the size or reduce the size of the face.
Because $G$ and $\curve$ define exactly $|\fleeq| + |\bounded|$ faces, 
and since in all but $|\pres|$ of them at least one of its edges must be cut, 
at least $|\fleeq| + |\bounded| - |\pres|$ edges must be cut. 
Since an edge belongs to at most two faces a cut can be over-counted at most twice and the claimed bound holds.
\end{proof}

\section{The combinatorial upper bound}\label{sec:Combinatorial Bound}
In this section, we define a potential function to bound the amortized cost of each operation in a sequence of flarb operations. 
For a 3-regular embedded planar graph $G = (V,E)$, we define two potential functions: a local potential function $\mu$ to measure the potential of each face, and a global potential function $\Phi$ to measure the potential of the whole graph.

\begin{definition}
Let $F$ be the set of faces of a 3-regular embedded planar graph $G = (V, E)$.  For each face $f \in F$, let $\mu(f) = \min\{\lceil\sqrt{|V|}\rceil, |f|\}$, where $|f|$ is the number of edges on the boundary of $f$.  
The potential $\Phi(G)$ of $G$ is defined as follows:
\shortfull{$}{\[}
\Phi(G) = \lambda\sum_{f \in F}\mu(f),
\shortfull{$}{\]}
for some sufficiently large positive constant $\lambda$ to be defined later.
\end{definition}

Recall that the potential $\mu(f)$ of a $\curve$-face $f$ remains unchanged as long as $|f|, |f'|\geq \sqrt{|V|}$, where $f'$ is the modified face of $f$ after the flarb. 
Since there is no change in potential that we can use within large $\curve$-faces, 
we exclude them from our analysis and focus only on smaller $\curve$-faces. We formalize this notion in the following section.

\subsection{Flarbable sub-curves}

Given a flarbable curve $\curve$, a (connected) curve $\gamma \subseteq \curve$ is a \emph{flarbable sub-curves}.
Let $\eps_\gamma = e_1, \ldots, e_k$ (or simply~$\eps$) be the set of fleeq-edges intersected by $\gamma$ given in order of intersection after orienting $\gamma$ arbitrarily.
We call $\eps$ the \emph{subfleeq} induced by $\gamma$.
We say that a face is a $\gamma$-face if two of its fleeq-edges are crossed by $\gamma$ (if $\gamma$ has an endpoint in the interior of this face, it is not a $\gamma$-face).

Consider the set of all edges of $G$ intersected or enclosed by $\curve$ that bound some $\gamma$-face.
Since $\fleeq$ is flarbable, these edges induce a connected subgraph $Y_\gamma$ of $G$ with $|\eps| = k$ leaves (vertices of degree 1), namely the endpoints outside of $\curve$ of each fleeq-edge in $\eps$; see Figure~\ref{fig:Subflarbable curves}.
Notice that $Y_\gamma$ may consist of some bounded faces contained in the interior of $\curve$.
Let $H_\gamma$ be the set of bounded faces of $Y_\gamma$ and let $\delta_2$ be the number of vertices of degree 2 of $Y_\gamma$.
Since $Y_\gamma$ consists of $k$ vertices of degree 1, Lemma~\ref{lemma:size of graphs with holes} implies the following result.

\begin{figure}[tb]
\centering
\includegraphics{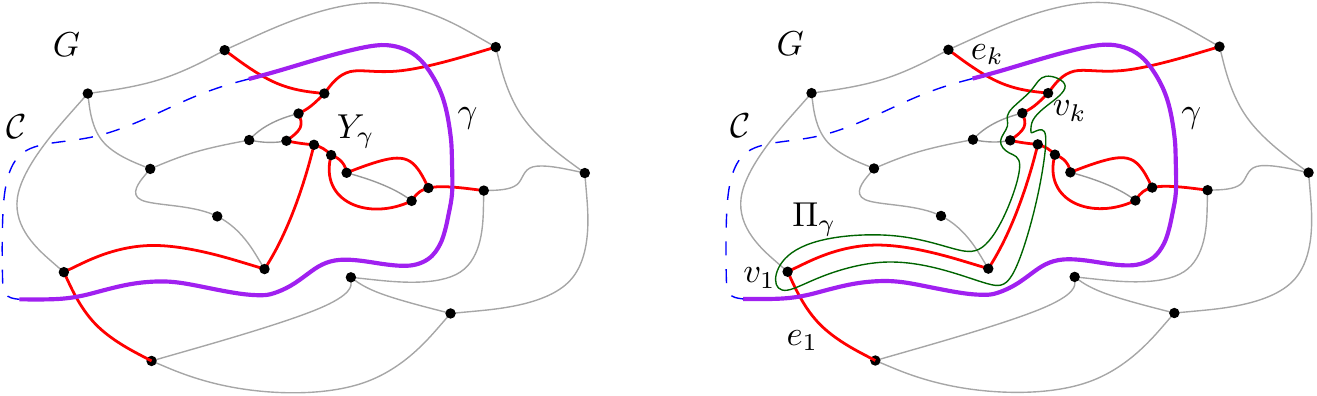}

\caption{\small Left: A flarbable sub-curves $\gamma$ is contained in a flarbable curve $\curve$. 
The graph $Y_\gamma$ is the union of all edges bounding a  $\gamma$-face.
Right: The path $\Pi_\gamma$ connects the endpoints of the first and last fleeq-edges crossed by $\gamma$ by going along the boundary of the outer-face of $Y_\gamma$.}
\label{fig:Subflarbable curves}
\end{figure}

\begin{corollary}\label{corollary:Size of Y}
The graph $Y_\gamma$ consists of exactly $2k + \delta_2 + 3|H_\gamma| - 3$ edges.
\end{corollary}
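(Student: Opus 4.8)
The plan is to apply Lemma~\ref{lemma:size of graphs with holes} directly to the graph $H = Y_\gamma$, so most of the work is simply verifying that $Y_\gamma$ satisfies the hypotheses of that lemma and then reading off the conclusion. First I would check that $Y_\gamma$ is a connected planar graph all of whose vertices have degree $1$, $2$, or $3$: connectivity and planarity are inherited from the fact (asserted in the text just before the corollary) that the edges bounding the $\gamma$-faces induce a connected subgraph of the planar graph $G$, and since $G$ is $3$-regular, every vertex of $Y_\gamma$ has degree at most $3$ in $Y_\gamma$. Degree-$0$ vertices cannot occur because every vertex of $Y_\gamma$ is an endpoint of an edge of $Y_\gamma$, so indeed $\delta_i = 0$ is only possible for $i \notin \{1,2,3\}$ and the degrees lie in $\{1,2,3\}$ exactly.

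Next I would identify the parameters $\delta_1$ and $F_H$ appearing in Lemma~\ref{lemma:size of graphs with holes}. By the discussion preceding the corollary, the degree-$1$ vertices of $Y_\gamma$ are precisely the $k = |\eps|$ endpoints lying outside $\curve$ of the fleeq-edges in the subfleeq $\eps$; I would justify this by noting that each such endpoint $v_i$ lies in $\out$, so it is incident in $Y_\gamma$ only to its fleeq-edge $e_i$ (its other two incident edges in $G$ are not on the boundary of any $\gamma$-face, since $\gamma$ does not cross them), giving $\delta_1 = k$. The number of bounded faces $F_H$ of $Y_\gamma$ is $|H_\gamma|$ by the definition of $H_\gamma$ as the set of bounded faces of $Y_\gamma$. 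Finally $\delta_2$ is, by definition, the number of degree-$2$ vertices of $Y_\gamma$.

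Plugging $\delta_1 = k$, $\delta_2 = \delta_2$, and $F_H = |H_\gamma|$ into the formula $2\delta_1 + \delta_2 + 3F_H - 3$ from Lemma~\ref{lemma:size of graphs with holes} yields exactly $2k + \delta_2 + 3|H_\gamma| - 3$ edges, which is the claimed count. I do not anticipate a genuine obstacle here; the only point requiring a little care is the claim that the degree-$1$ vertices of $Y_\gamma$ are \emph{exactly} the outside endpoints of the subfleeq edges (no more, no fewer) — in particular, that no internal vertex of the enclosed tree-like structure accidentally becomes a leaf of $Y_\gamma$. This follows because such an internal vertex lies in $\ins$, hence all three of its incident edges in $G$ are enclosed or crossed by $\curve$ and bound $\gamma$-faces, so it retains degree $3$ in $Y_\gamma$; and it cannot have degree $2$ either for the same reason. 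With that verified, the corollary is immediate.
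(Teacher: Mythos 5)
Your overall strategy is exactly the paper's: apply Lemma~\ref{lemma:size of graphs with holes} to $H = Y_\gamma$ with $\delta_1 = k$ and $F_H = |H_\gamma|$, and read off the edge count. That part is fine, and the conclusion is correct. However, two of your supporting sub-arguments are wrong, and one of them contradicts the statement you are proving.

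First, to justify that the outside endpoint $v_i$ has degree $1$, you say its other two incident edges ``are not on the boundary of any $\gamma$-face, since $\gamma$ does not cross them.'' This is a non sequitur: almost every edge bounding a $\gamma$-face is \emph{not} crossed by $\gamma$ (a $\gamma$-face can have many edges, only two of which are fleeq-edges). The relevant point is different: since $v_i \in \out[\curve]$, its other two edges are neither enclosed by $\curve$ nor (in the generic situation the paper implicitly assumes) intersected by $\curve$, so they fail the ``intersected or enclosed by $\curve$'' requirement in the definition of $Y_\gamma$, regardless of which faces they bound.

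Second, and more seriously, you claim that every internal vertex $v \in \ins[\curve]$ of $Y_\gamma$ has \emph{all three} of its incident edges bounding $\gamma$-faces, hence degree~$3$, and ``cannot have degree $2$ either.'' This is false, and if it were true the corollary would not need a $\delta_2$ term at all. A vertex of $\ins[\curve]$ lies on the boundary of three faces of $G$; if only one of them is a $\gamma$-face, exactly two of its incident edges bound that $\gamma$-face and the third does not, giving degree~$2$ in $Y_\gamma$. Such vertices are precisely what $\delta_2$ counts, and the paper relies on them downstream (e.g.\ $\delta_2 \geq s_b$ in the proof of Theorem~\ref{thm:delta}). The correct statement you need, in order to conclude $\delta_1 = k$, is only that every internal vertex of $Y_\gamma$ has degree \emph{at least} $2$: if $v \in \ins[\curve]$ is a vertex of $Y_\gamma$, it bounds some $\gamma$-face $f$, and the two edges at $v$ on the boundary of $f$ are both intersected or enclosed by $\curve$ and both bound $f$, so both lie in $Y_\gamma$. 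This gives $\delta_1 = k$ without over-claiming $\delta_2 = 0$. Since $\delta_2$ is carried as a free parameter in the formula, your final count is nonetheless correct, but the reasoning leading to it misdescribes the structure of $Y_\gamma$.
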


Recall that a $\curve$-face $f$ is preserved if its corresponding modified face $f'$ in $\calG(G, \curve)$ has the same number of edges, i.e., $|f'| = |f|$.
We say that $f$ is \emph{augmented} if $|f'| = |f| + 1$ and
we call $f$ \emph{shrinking} if $|f'| < |f|$.  
Notice that these are all the possible cases as $f$ gains at most one new edge during the flarb, namely the $\curve$-edge crossing this face.

In the context of a particular flarbable sub-curve $\gamma$,
let $a_\gamma,s_\gamma$ and $p_\gamma$ be the number of augmented, shrinking and preserved $\gamma$-faces, respectively (or simply $a,s$ and $p$ if $\gamma$ is clear from the context). 
We further differentiate among the $s$ shrinking $\gamma$-faces.
A shrinking $\gamma$-face is \emph{interior} if it contains no vertex of degree~2 of $Y_\gamma$ and does not share an edge with an augmenting face.
Let $s_a$ be the number of shrinking $\gamma$-faces that share an edge with an augmented face, let $s_b$ be the number of shrinking $\gamma$-faces not adjacent to an augmented face that have a vertex of degree 2 of $Y_\gamma$, and let $s_c$ be the number of interior shrinking $\gamma$-faces. 
Therefore, $s = s_a + s_b + s_c$ is the total number of shrinking $\gamma$-faces.

Since each augmented face has at most two edges and because there are $a$ augmented faces, we know that $s_a \leq 2a$.
Let $v_1$ and $v_k$ be the endpoints of the edges $e_1$ and $e_k$ that lie inside $\curve$.
Let $\Pi_{\gamma}$ be the unique path connecting $v_1$ and $v_k$ in $Y_\gamma$ that traverses the boundary of the outer face of $Y_\gamma$ and stays in the interior of $\curve$; see Figure~\ref{fig:Subflarbable curves}.

Notice that $\Pi_\gamma$ contains all the edges of $\gamma$-faces that may bound a $\gamma'$-face for some other flarbable sub-curve $\gamma'$ disjoint from $\gamma$.
In the end, we aim to have bounds on the number of edges that will be removed from the $\gamma$-faces during the flarb, but some of these edges may be double counted if they are shared with a $\gamma'$-face. 
Therefore, we aim to bound the length of $\Pi_\gamma$ and count precisely the number of edges that could possibly be double-counted. 

\begin{lemma}\label{lemma:Bounding the length of Pi}
The path $\Pi_{\gamma}$ has length at most $k + 3|H_\gamma| + \delta_2 - a - s_c$.
\end{lemma}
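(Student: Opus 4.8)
The plan is to turn the bound into a statement about which edges of $Y_\gamma$ can lie on $\Pi_\gamma$, and then apply the exact edge count of Corollary~\ref{corollary:Size of Y}. Let $Y$ denote $Y_\gamma$ with its $k$ fleeq-edges $e_1,\dots,e_k$ deleted; by Corollary~\ref{corollary:Size of Y}, $Y$ has exactly $k+\delta_2+3|H_\gamma|-3$ edges. As each $e_i$ crosses $\curve$ while $\Pi_\gamma$ stays in the interior of $\curve$, no fleeq-edge lies on $\Pi_\gamma$, so $\Pi_\gamma$ is a path in $Y$, and it suffices to prove $|\Pi_\gamma|\le|Y_\gamma|-k-(a+s_c-3)$, i.e.\ that $\Pi_\gamma$ omits at least $a+s_c-3$ edges of $Y$.

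Next I would pin down the edges $\Pi_\gamma$ can use. Write $f_1,\dots,f_{k-1}$ for the $\gamma$-faces in order, $f_i$ being the one $\gamma$ crosses between $e_i$ and $e_{i+1}$. The boundary of $f_i$ consists of $e_i$, $e_{i+1}$, an outer arc of edges entirely exterior to $\curve$ (not in $Y_\gamma$), and an inner arc $P_i$ of edges of $Y$ from the interior endpoint $v_i$ of $e_i$ to that of $e_{i+1}$; counting boundary edges across the flarb gives $|f_i'|=|f_i|+1-|P_i|$, so $f_i$ is augmented, preserved, or shrinking exactly when $|P_i|$ is $0$, $1$, or $\ge2$. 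Let $R$ be the union of the parts of $f_1,\dots,f_{k-1}$ interior to $\curve$; as consecutive $\gamma$-faces are glued along a fleeq-edge, $R$ is connected, and $\Pi_\gamma$ is exactly the portion of the boundary of $R$ formed by edges of $Y_\gamma$ (equivalently, the piece of the outer-face boundary of $Y_\gamma$ joining $v_1$ to $v_k$ inside $\curve$). Hence $\Pi_\gamma$ omits every edge of $Y$ that is interior to $R$ --- i.e.\ shared by two $\gamma$-faces --- and every edge bounding a face of $H_\gamma$ enclosed by $R$.

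The core is then a charging scheme producing $a+s_c$ such omitted edges. To an augmented face $f_i$ I would charge the third edge at the degree-$3$ vertex $v_i=v_{i+1}$ of $G$: the faces of $G$ at $v_i$ are $f_i$ and the two $\gamma$-faces flanking it along $e_i,e_{i+1}$, so that edge is shared by two $\gamma$-faces and omitted by $\Pi_\gamma$. (Flarbability, via connectivity of the interior of $\curve$, rules out three fleeq-edges at one vertex except in degenerate small cases, so no two augmented faces are adjacent and the $v_i$ are distinct.) To an interior shrinking face $f_i$ I would charge an edge at an internal vertex $x$ of $P_i$: such an $x$ exists since $|P_i|\ge2$, and $x$ has degree $3$ in $Y_\gamma$ because $f_i$ is interior; since the third edge at $x$ lies in $Y_\gamma$, one of the two faces at $x$ other than $f_i$ is a $\gamma$-face, and the $P_i$-edge at $x$ between $f_i$ and that $\gamma$-face is shared by two $\gamma$-faces, hence omitted. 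Summing, $\Pi_\gamma$ omits the $k$ fleeq-edges together with these $a+s_c$ edges, and once they are seen to be distinct up to an additive $3$, one gets
\[
|\Pi_\gamma|\ \le\ |Y_\gamma|-k-(a+s_c-3)\ =\ k+3|H_\gamma|+\delta_2-a-s_c.
\]

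The step I expect to be the main obstacle is the injectivity of this charging and the behaviour near the two ends $v_1,v_k$. The clause ``does not share an edge with an augmenting face'' in the definition of $s_c$ is exactly what prevents an augmented face and an interior shrinking face from being charged the same edge, but an edge shared by two interior shrinking faces could still be claimed twice, so one must pick the charged edges consistently (for instance, relative to a fixed orientation of $\Pi_\gamma$) to break such ties. The additive $3$ is genuinely needed --- in the ``tree'' case, where $Y_\gamma$ has no bounded faces and no shared edges, one has $a=s_c=0$ while $\Pi_\gamma$ already uses all $k+\delta_2+3|H_\gamma|-3$ non-fleeq edges --- so the accounting must be arranged so that every end-effect at $v_1$ or $v_k$ is absorbed by one of those three units rather than charged against an interior face.
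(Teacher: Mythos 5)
Your proposal follows the same high-level route as the paper: start from the edge count of $Y_\gamma$ given by Corollary~\ref{corollary:Size of Y}, remove the $k$ fleeq-edges from contention, and then argue that roughly $a$ edges associated with augmented faces and $s_c$ edges associated with interior shrinking faces are also missed by $\Pi_\gamma$, up to an additive constant for endpoint effects. The edges you charge are in fact the same ones the paper identifies (the ``augmented edge'' at each degree-$3$ vertex incident to two consecutive fleeq-edges, and a non-fleeq edge inside each interior shrinking face). Where the two arguments diverge is in \emph{why} those edges are omitted, and this is precisely where your sketch has a genuine gap.

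Your pivotal step is the assertion ``Hence $\Pi_\gamma$ omits every edge of $Y$ that is interior to $R$ --- i.e.\ shared by two $\gamma$-faces.'' This is not established, and it is exactly the crux. Note that an edge $e$ shared by two $\gamma$-faces has the outer face of $Y_\gamma$ on both of its sides (each $\gamma$-face, restricted to the interior of $\curve$, is part of the outer face of $Y_\gamma$ since it opens to $\curve$), so $e$ is a bridge of $Y_\gamma$ and \emph{does} lie on the outer-face boundary walk; whether the particular $v_1$-to-$v_k$ path along that walk uses $e$ is not automatic, and indeed near $v_1$ and $v_k$ it can (you give the correct witness yourself: when the augmented vertex equals $v_1$ or $v_k$, the charged edge is on $\Pi_\gamma$). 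So the ``hence'' requires an actual argument. The paper sidesteps this issue entirely with two direct, local arguments: an augmented vertex has only one non-fleeq edge in $Y_\gamma$, so it would be a dead end of $\Pi_\gamma$ unless it \emph{is} an endpoint, giving immediately at most two augmented edges on $\Pi_\gamma$; and for an interior shrinking face, $\Pi_\gamma$ visits its boundary contiguously and cannot use two consecutive edges (that would force a degree-$2$ vertex of $Y_\gamma$, contradicting interiority), so at least one of its $\ge 2$ non-fleeq edges is missed. These local arguments make the endpoint bookkeeping and the distinctness of the excluded edges essentially automatic, whereas in your charging framework both the end-effect absorption and the injectivity (e.g.\ two interior shrinking faces both charging a common edge) remain to be worked out, as you correctly flag but do not resolve. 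In short: same skeleton and same target edges, but the justification that those edges avoid $\Pi_\gamma$ is left as an unproved geometric claim in your sketch, and the paper's proof replaces it with elementary degree/contiguity arguments.
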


\begin{proof}
Notice that no fleeq-edge can be part of $\Pi_{\gamma}$ or this path would go outside of $\curve$, i.e., there are $k$ fleeq-edges of $Y_\gamma$ that cannot be part of $\Pi_{\gamma}$.

We say that a vertex is \emph{augmented} if it is incident to two fleeq-edges and a third edge that is not part of~$\eps$, which we call an \emph{augmented edge}. Because each augmented $\gamma$-face has exactly one augmented vertex, there are exactly $a$ augmented vertices in $Y_\gamma$. 
Moreover, $\Pi_{\gamma}$ contains at most 2 augmented vertices (if $v_1$ or $v_k$ is augmented). 
Thus, at most two augmented edges can be traversed by $\Pi_{\gamma}$ and hence, at least $a-2$ augmented edges of $Y_\gamma$ do not belong to~$\Pi_{\gamma}$. 

Let $f$ be an internal shrinking $\gamma$-face. 
Since $f$ is not adjacent to an augmented $\gamma$-face, it has no augmented edge on its boundary. We claim that $f$ has at least one edge that is not traversed by $\Pi_{\gamma}$. 
If this claim is true, then there are at least $s_c$ non-fleeq non-augmented edges that cannot be used by $\Pi_{\gamma}$---one for each internal shrinking $\gamma$-face.
Thus, since $Y_\gamma$ consists of $2k + 3|H_\gamma| + \delta_2 - 2$ edges, the number of edges in $\Pi_{\gamma}$ is at most
\shortfull{
$2k + 3|H_\gamma| + \delta_2 - 2 - (k + a-2 + s_c) = k + 3|H_\gamma| + \delta_2 - a - s_c.$
}{\[
2k + 3|H_\gamma| + \delta_2 - 2 - (k + a-2 + s_c) = k + 3|H_\gamma| + \delta_2 - a - s_c.
\]
}

It remains to show that each internal shrinking $\gamma$-face $f$ has at least one non-fleeq edge that is not traversed by $\Pi_{\gamma}$. 
If $\Pi_{\gamma}$ contains no edge on the boundary of $f$, then the claim holds trivially.
If $\Pi_{\gamma}$ contains exactly one edge of $f$, then since $f$ is shrinking, it has at least 4 edges and two of them are not fleeq-edges. Thus, in this case there is one edge of $f$ that is not traversed by $\Pi_{\gamma}$. We assume from now on that $\Pi_\gamma$ contains at least two edges of $f$.

We claim that that $\Pi_{\gamma}$ visits a contiguous sequence of edges along the boundary of $f$. 
To see this, note that each face of $Y_\gamma$ lying between $\Pi_{\gamma}$ and the boundary of $f$ cannot be crossed by $\curve$. Therefore, if we consider the first edge of $\Pi_{\gamma}$ that is not on the boundary of $f$ after visiting $f$ the first time, the this edge is incident to the outer face of $Y_\gamma$ and the only face of $Y_\gamma$ that it is incident with does not intersect~$\curve$. This is a contradiction, since this edge should not be part of $Y_\gamma$ by definition.
Therefore, $\Pi_{\gamma}$ visits a contiguous sequence of edges along $f$. 

If $\Pi_{\gamma}$ visits 2 consecutive edges of $f$, then the vertex in between them must have degree 2 in $Y_\gamma$, as the two edges are incident to the outer face---a contradiction since $f$ is an internal shrinking face with no vertex of degree 2. Consequently, if $f$ is an internal shrinking face, it has always at least one non-fleeq edge that is not traversed by $\Pi_{\gamma}$.
\end{proof}

\subsection{How much do faces shrink in a flarb?}

In order to analyze the effect of the flarb operations on flarbable sub-curves, we think of each edge as consisting of two \emph{half-edges}, each adjacent to one of the two faces incident to this edge. 
For a given edge, the algorithm may delete its half-edges during two separate flarbs of differing flarbable sub-curves.

We define the operation $\flarb[G, \gamma]$ to be the operation which executes steps 1 and 2 of the flarb on the flarbable sub-curve $\gamma$ and then deletes each half-edge with both endpoints in $\ins$ adjacent to a $\gamma$-face. 
Since $\flarb[G, \gamma]$ removes and adds half-edges, we are interested in bounding the net balance of half-edges throughout the flarb. 
To do this, we measure the change in size of a face during the flarb.

Recall that $a,s$ and $p$ are the number of augmented, shrinking and preserved $\gamma$-faces, respectively. 
The following result provides a bound on the total ``shrinkage'' of the faces crossed by a given flarbable sub-curve. 

\begin{theorem} \label{thm:delta}
Given a flarbable curve $\curve$ on $G$ and a flarbable sub-curve $\gamma$ crossing the fleeq-edges $\eps = e_1, \ldots, e_k$,
let $f_1, \ldots, f_{k}$ be the sequence of $\gamma$-faces and let $f_1', \ldots, f_{k}'$ be their corresponding modified faces after the flarb $\flarb[G, \gamma]$. Then,
\shortfull{
$\sum_{i = 1}^k (|f_i| - |f_i'|) \geq s/2$.
}{
\begin{equation} \label{eqn:deltathm}
\sum_{i = 1}^k (|f_i| - |f_i'|) \geq s/2.
\end{equation}
}
\end{theorem}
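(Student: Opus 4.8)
The plan is to decompose the left-hand sum $\sum_{i=1}^k (|f_i| - |f_i'|)$ according to the three types of shrinking faces, $s = s_a + s_b + s_c$, and to extract from each type a guaranteed amount of shrinkage, while being careful not to double-count shrinkage across faces that share an edge. First I would recall that a shrinking $\gamma$-face $f$ satisfies $|f| - |f'| \geq 1$ simply by definition, and an augmented face \emph{increases} its contribution by $-1$, while preserved faces contribute $0$. So the naive bound is $\sum (|f_i|-|f_i'|) \geq s - a$, which is not enough when $a$ is large; the whole point of the theorem is that the shrinkage ``absorbed'' by neighboring augmented faces is already accounted for, because an augmented face steals a half-edge from a shrinking neighbor rather than destroying it.

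The key idea is to work with half-edges, as set up just before the statement: $\flarb[G,\gamma]$ deletes exactly those half-edges with both endpoints in $\ins$ that are adjacent to a $\gamma$-face, and adds the half-edges of the new $\curve$-edges on the $\gamma$-side. For a $\gamma$-face $f_i$, the quantity $|f_i| - |f_i'|$ is precisely (number of half-edges of $f_i$ deleted) minus (number of half-edges of $f_i'$ that are new, i.e.\ at most one, the $\curve$-edge). So $\sum_i (|f_i| - |f_i'|)$ equals the total number of deleted half-edges adjacent to $\gamma$-faces, minus the number of $\gamma$-faces that gained a $\curve$-edge. Now I would count deleted half-edges via the graph $Y_\gamma$: every non-fleeq edge of $Y_\gamma$ lying strictly inside $\curve$ contributes \emph{two} deleted half-edges if both its incident faces are $\gamma$-faces, and one otherwise; the edges on $\Pi_\gamma$ are exactly the ones at risk of being shared with a disjoint $\gamma'$-face, hence possibly contributing only one half-edge to the $\gamma$-side count. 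Using Corollary~\ref{corollary:Size of Y} for $|E(Y_\gamma)|$ and Lemma~\ref{lemma:Bounding the length of Pi} to bound $|\Pi_\gamma|$, one obtains a lower bound on the number of deleted half-edges adjacent to $\gamma$-faces of the form (roughly) $2(2k + 3|H_\gamma| + \delta_2 - 2 - k) - |\Pi_\gamma| - (\text{fleeq half-edges})$, and then subtracting the at-most-$k$ new $\curve$-edge half-edges.

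The main obstacle — and the step I expect to require the most care — is the bookkeeping that converts ``number of deleted half-edges minus number of new $\curve$-edges'' into a clean lower bound of $s/2$, correctly handling the three classes $s_a, s_b, s_c$ so that nothing is charged twice. Concretely: a type-$s_c$ (interior) shrinking face has all its non-fleeq edges strictly inside $\curve$ with \emph{both} sides being $\gamma$-faces except where $\Pi_\gamma$ passes, and Lemma~\ref{lemma:Bounding the length of Pi} already reserves one non-$\Pi_\gamma$ edge per such face — these yield the bulk of the shrinkage directly. A type-$s_b$ face has a degree-2 vertex of $Y_\gamma$, which bounds its loss but still gives at least $1$ (and its ``missing'' shrinkage is precisely what the $\delta_2$ term in the edge count of $Y_\gamma$ pays for). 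A type-$s_a$ face shares an edge with an augmented face: the half-edge on the augmented side survives as part of the $\curve$-edge, so this face contributes $|f_i|-|f_i'|\geq 1$ only if it loses some \emph{other} half-edge — and here I would pair each augmented $\gamma$-face against the shrinking faces it is adjacent to, using $s_a \leq 2a$, so that the $+a$ loss from augmented faces in the naive bound is offset by the $s_a$ shrinking faces each losing an extra half-edge on a non-augmented side. Assembling these three contributions and dividing by $2$ (each counted half-edge can be blamed on at most two faces, exactly as in the lower-bound half of Lemma~\ref{lemma:The cost of a flarb}) gives $\sum_i(|f_i|-|f_i'|) \geq (s_a + s_b + s_c)/2 = s/2$. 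I would finish by double-checking the two extremal corrections — the ``at most two augmented vertices on $\Pi_\gamma$'' and the endpoint faces of $\gamma$ — against the $-2$ additive slack already present in Corollary~\ref{corollary:Size of Y}, so that these constant losses are absorbed rather than appearing in the final inequality.
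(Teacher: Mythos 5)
Your proposal follows the paper's proof essentially step for step: the half-edge identity $\sum_i(|f_i|-|f_i'|) = \Delta - k$, the edge count from Corollary~\ref{corollary:Size of Y}, the bound on $|\Pi_\gamma|$ from Lemma~\ref{lemma:Bounding the length of Pi}, and the decomposition $s = s_a + s_b + s_c$ with $\delta_2 \geq s_b$ and $s_a \leq 2a$ are all precisely what the paper uses. The one thing to tidy up is your closing ``dividing by $2$ because each half-edge is blamed on at most two faces'' --- in the paper the factor $1/2$ enters only through $a \geq s_a/2$, yielding $\Delta - k \geq s_a/2 + s_b + s_c \geq s/2$; no global halving is performed (and a uniform halving of a bound like $\Delta - k \geq s_a/2 + s_b + s_c$ would in fact be too weak), so the double-counting rationale borrowed from Lemma~\ref{lemma:The cost of a flarb} is a red herring here even though the correct mechanism, which you also state, gets you there.
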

\begin{proof}
Recall that no successive pair of $\gamma$-faces can both be augmented unless $\fleeq$ consists of three edges incident to a single vertex.
In this case, at most $3$ $\gamma$-faces can be augmented, so $\sum_{i = 1}^k (|f_i| - |f_i'|) = 3$ and the result holds trivially; hence, we assume from now on that no two successive faces are both augmented.

Let $\Delta$ be the number of half-edges removed during $\flarb[G, \gamma]$.
Notice that to count how much a face $f_i$ shrinks when becoming $f'_i$ after the flarb, we need to count the number of half-edges of $f_i$ that are deleted and the number that are added in $f'_i$. 
Since exactly one half-edge is added in each $f'_i$, we know that $\sum_{i = 1}^k (|f_i| - |f_i'|) = \Delta - k$.
We claim that $\Delta \geq k + s/2$.
If this claim is true, then $\sum_{i = 1}^k (|f_i| - |f_i'|) \geq s/2$ as stated in the theorem.
In the remainder of this proof, we show this bound on $\Delta$.

Let $\calT = (V_\calT, E_\calT)$ be the subgraph of $Y_\gamma$ obtained by removing its $k$ fleeq-edges. 
It follows from~\ref{corollary:Size of Y} that $|E_\calT|  = k + 3|H_\gamma| + \delta_2 - 3$ .
To obtain a precise counting of $\Delta$, notice that for some edges of $\calT$, $\flarb[G, \gamma]$ removes only one of their half-edges and for others it will remove both of them. 
Since the fleeq-edges are present in each of the faces $f_1, \ldots, f_k$ before and after the flarb, 
we get that
\shortfull{$\Delta = 2|E_\calT| - S_{\calT},$}{
\begin{equation} \label{eqn:single-edges}
\Delta = 2|E_\calT| - S_{\calT},
\end{equation}
} where $S_{\calT}$ denotes the number of edges in $\calT$ with only one half-edge incident to a face of $f_1, \ldots, f_k$. 

Note that the edges of $S_{\calT}$ are exactly the edges on the path $\Pi_\gamma$ bounded in Lemma~\ref{lemma:Bounding the length of Pi}.
Therefore, $S_{\calT} \leq k + 3|H_\gamma| + \delta_2 - a - s_c$. 
By using this bound\shortfull{}{ in~(\ref{eqn:single-edges})}, we get 
\[
\Delta \geq 2(k+3|H_\gamma| + \delta_2 -3) - (k + 3|H_\gamma| + \delta_2 - a - s_c)  = 
k + 3|H_\gamma| + \delta_2 +a + s_c - 6.
\]

Since each shrinking $\gamma$-face accounted for by $s_b$ has a vertex of degree 2 in $Y_\gamma$, we know that $\delta_2 \geq s_b$. Moreover, $s_a\leq 2a$ as each shrinking $\gamma$-face can be adjacent to at most two augmenting $\gamma$-faces. Therefore, since $s = s_a+ s_b + s_c$, we get that $\Delta \geq k + 3|H_\gamma| +s_a/2 + s_b + s_c\geq k + s/2$, where $s$ is the number of shrinking $\gamma$-faces proving the claimed bound on $\Delta$.
\end{proof}

\subsection{Flarbable sequences}
Let $\calG^0  =  G$.
A sequence of curves $\scrC = \curve_1, \ldots, \curve_k$ is \emph{flarbable} if for each $i \in [k]$, $\calC_i$ is a flarbable on 
\shortfull{$}{\[}
\calG^i = \calG(\calG^{i-1}, \curve_{i}).
\shortfull{$}{\]}
As a notational shorthand, let $\calF^i$ denote the flarb operation $\calF(\calG^{i-1}, \curve_i)$ when $\scrC$ is a flarbable sequence for~$G$.  

\begin{theorem} \label{thm:amortizeoneflarb}
For a 3-regular planar graph $G = (V, E)$ and some flarbable sequence $\scrC = \curve_1, \ldots, \curve_N$ of flarbable fleeqs, for all $i \in [N]$,
\shortfull{$}{$$}\cost[\calG^{i-1}, \curve_i] + \Phi(\calG^i) - \Phi(\calG^{i-1}) \leq O(\sqrt{|V_i|}),\shortfull{$}{$$}
where $V_i$ is the set of vertices of $\calG^i$.
\end{theorem}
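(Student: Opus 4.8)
The plan is to amortize, showing that the cost of $\calF^i$ is paid, up to an additive $O(\sqrt{|V_i|})$, by the drop in potential it causes. Fix $\curve=\curve_i$ and $G=\calG^{i-1}$, and write $n=|V(G)|$, so that $|V_i|\le n+2$ (Lemma~\ref{lemma:Net change in vertices after flarb}) and $\lceil\sqrt{|V_i|}\rceil=\lceil\sqrt n\rceil+O(1)$. Let $a,s,p$ be the numbers of augmented, shrinking and preserved $\curve$-faces; then $|\fleeq|=a+s+p$, $|\pres|=p$, and Lemma~\ref{lemma:The cost of a flarb} gives $\cost[G,\curve]\le 4|\fleeq|+3|\bounded|-4|\pres|=4a+4s+3|\bounded|$. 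Call a face \emph{small} if it has at most $\lceil\sqrt n\rceil$ edges; let $a'$ (resp.\ $s'$) be the number of small augmented (resp.\ small shrinking) $\curve$-faces, and let $b$ be the number of non-small $\curve$-faces. Since $G$ is $3$-regular, $\sum_f|f|=2|E(G)|=3n$, so $G$ has only $O(\sqrt n)$ faces with $\ge\lceil\sqrt n\rceil$ edges; in particular $b=O(\sqrt n)$, and the $\mu$-value of a face that survives the flarb changes only if the cap $\lceil\sqrt{\cdot}\rceil$ grows, and then only by $+1$, contributing $O(\sqrt n)$ to $\Phi(\calG^i)-\Phi(G)$ altogether.

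Next I would expand $\Phi(\calG^i)-\Phi(G)$ over the faces that actually change: the $|\bounded|$ vanishing faces each remove at least $1$ from $\sum_f\mu(f)$; the interior of $\curve$ becomes a face of $|\fleeq|$ edges, so its $\mu$ is $\le\lceil\sqrt n\rceil+O(1)$; a preserved $\curve$-face contributes $0$; an augmented $\curve$-face contributes at most $1$, and $0$ unless it is small; and a shrinking $\curve$-face $f\mapsto f'$ contributes at most $-(|f|-|f'|)$ when $f$ is small and at most $0$ otherwise. Combining this with the cost bound yields
\[
\cost[G,\curve]+\Phi(\calG^i)-\Phi(G)\ \le\ 4a+4s+3|\bounded|+\lambda\Bigl(a'-|\bounded|-\!\!\!\sum_{\text{small shrinking }\curve\text{-faces}}\!\!\!(|f|-|f'|)\Bigr)+\lambda\cdot O(\sqrt n).
\]
For $\lambda\ge 3$ the term $3|\bounded|-\lambda|\bounded|\le0$, and since $a\le a'+b$ and $s\le s'+b$, everything now reduces to lower-bounding the shrinkage sum by roughly $\tfrac32 a'+\tfrac14 s'$.

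To obtain that bound I would split $\curve$ at its non-small $\curve$-faces: these $b$ faces divide $\curve$ into $b$ arcs (or, if $b=0$, take the single closed sub-curve $\gamma_1=\curve$), and placing the endpoints of each arc inside the two non-small faces flanking it gives flarbable sub-curves $\gamma_1,\dots,\gamma_b$ that cross only small $\curve$-faces. Then every small $\curve$-face is a $\gamma_j$-face for exactly one $j$, no non-small $\curve$-face is a $\gamma_j$-face, and the modified face of a $\gamma_j$-face under $\flarb[G,\gamma_j]$ equals its modified face under $\calF^i$ (its boundary inside $\curve$ runs between two edges both crossed by $\gamma_j$); hence $\sum_j a_{\gamma_j}=a'$, $\sum_j s_{\gamma_j}=s'$, and $\sum_j\sum_{\text{shr.\ }\gamma_j\text{-faces}}(|f|-|f'|)$ equals the shrinkage sum above. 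Now I apply Theorem~\ref{thm:delta} to each $\gamma_j$, but I need the sharper estimate established inside its proof, namely $\sum_i(|f_i|-|f_i'|)\ge\delta_2+a_{\gamma_j}+s_c-6$; since augmented $\gamma_j$-faces each contribute $-1$ to that sum, $\sum_{\text{shr.\ }\gamma_j\text{-faces}}(|f|-|f'|)=\sum_i(|f_i|-|f_i'|)+a_{\gamma_j}\ge\delta_2+2a_{\gamma_j}+s_c-6$, and using $\delta_2\ge s_b$ and $s_a\le 2a_{\gamma_j}$,
\[
\sum_{\text{shr.\ }\gamma_j\text{-faces}}(|f|-|f'|)\ \ge\ \max\{2a_{\gamma_j},\,s_{\gamma_j}\}-6\ \ge\ \tfrac32 a_{\gamma_j}+\tfrac14 s_{\gamma_j}-6 .
\]
Summing over $j$ gives shrinkage $\ge\tfrac32 a'+\tfrac14 s'-6b$.

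Plugging this back, with $b=O(\sqrt n)$, bounds the right-hand side by $(4-\tfrac12\lambda)a'+(4-\tfrac14\lambda)s'+O(\lambda\sqrt n)$, so choosing $\lambda$ a large enough absolute constant (e.g.\ $\lambda=20$) makes both coefficients non-positive and leaves $O(\sqrt n)=O(\sqrt{|V_i|})$. The step I expect to be delicate is exactly the last two: extracting from the proof of Theorem~\ref{thm:delta} the ``$\max\{2a_{\gamma_j},s_{\gamma_j}\}$'' form — which must carry the augmented count \emph{and} the shrinking count simultaneously, since the bare $\tfrac12 s_{\gamma_j}$ of the theorem statement does not pay for the $4a$ coming from the cost bound — together with verifying that the non-small $\curve$-faces partition the small shrinking $\curve$-faces cleanly among the sub-curves while preserving modified faces. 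The cap in $\mu$ and the per-step change of $\lceil\sqrt{|V|}\rceil$ are only a bounded bookkeeping nuisance.
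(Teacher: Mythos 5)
Your argument reaches the correct bound and follows the same skeleton as the paper — the potential $\Phi$, the decomposition of $\curve_i$ at its large faces into $O(\sqrt{|V_i|})$ sub-curves, and the observation that only $O(\sqrt{|V_i|})$ faces exceed the cap — but you take a different algebraic route at the end, and that difference is worth noting. The paper cancels the contribution of augmented faces by invoking $a_j \leq s_j + 1$ (``each augmented face is adjacent to a shrinking face''), which turns the fleeq-length bound into $|\eps_j| \leq 2s_j + p_j + 2$, cancels $\sum_j p_j$ against $-4|\pres|$ in Lemma~\ref{lemma:The cost of a flarb}, and then pays the remaining $O(\sum_j s_j)$ with the \emph{stated} conclusion of Theorem~\ref{thm:delta}, namely shrinkage $\geq s/2$. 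You instead keep $a$ and $s$ separate in the cost bound ($\cost \leq 4a + 4s + 3|\bounded|$) and therefore need the shrinkage to dominate both $a'$ and $s'$; the stated $s/2$ does not suffice, which is why you correctly reach back into the \emph{proof} of Theorem~\ref{thm:delta} for $\Delta - k \geq \delta_2 + a + s_c - 6$, add back the $a$ negative contributions, and get shrinkage $\geq \max\{2a_{\gamma_j}, s_{\gamma_j}\} - O(1) \geq \tfrac32 a_{\gamma_j} + \tfrac14 s_{\gamma_j} - O(1)$. This buys you independence from $a_j \leq s_j + 1$, whose justification the paper leaves informal and which is genuinely the trickiest local fact in the paper's version; the price is that you are citing an intermediate estimate rather than a stated lemma, so a clean write-up along your lines would re-state Theorem~\ref{thm:delta} with the stronger conclusion $\sum_{\text{shrinking}}(|f|-|f'|) \geq \max\{2a,\,s\} - O(1)$. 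Two smaller remarks: your explicit accounting for the drift of the cap $\lceil\sqrt{|V|}\rceil$ (affecting only $O(\sqrt{|V_i|})$ faces, $+1$ each) is more careful than the paper, which silently treats $\mu$ as a function of $|f|$ alone; and the $-6$ per sub-curve you carry (and the paper's handling of the ``three fleeq-edges at one vertex'' case) both fold harmlessly into the $O(\sqrt{|V_i|})$ slack since $h = O(\sqrt{|V_i|})$.
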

\begin{proof}
Partition $\curve_i$ into smaller curves $\gamma_1, \ldots, \gamma_h$ such that for all $j \in [h]$, $\gamma_j$ is a maximal curve contained in $\curve_i$ that does not intersect the interior of a face with more than $\sqrt{|V_i|}$ edges. 
Since there can be at most $\sqrt{|V_i|}$ faces of size $\sqrt{|V_i|}$, we know that $h \leq \sqrt{|V_i|}$.
Let $\eps_j$ be the subfleeq containing each fleeq-edge crossed by $\gamma_j$.
Let $a_j,s_j$ and $p_j$ be the number of augmented, shrinking and preserved $\gamma_j$-faces, respectively. Notice that $|\eps_j| = a_j + s_j + p_j+1$. Moreover, since each augmented face is adjacent to a shrinking face, we know that $a_j \leq s_j+1$. Therefore, $|\eps_j| \leq  2s_j + p_j+2$.

Let $\calL_i$ be the set of $\curve_i$-faces with at least $\sqrt{|V_i|}$ edges and let $\omega_i$ be the set of all faces of $\calG^{i-1}$ completely enclosed in the interior of~$\curve_i$. 

First, we upper bound $\cost[\calG^{i-1}, \curve_i]$.  
By Lemma~\ref{lemma:The cost of a flarb}, we know that

\begin{align}
\cost[\calG^{i-1}, \curve_i] 
&\leq 4|\fleeq[\curve_i]| + 3|\bounded[\calG^{i-1}, \curve_i]| - 4|\pres[\calG^{i-1}, \curve_i]|\\
&= 4\sum_{j  = 1}^h |\eps_j| + 3|\bounded[\calG^{i-1}, \curve_i]| - 4|\pres[\calG^{i-1}, \curve_i]|\\
&\leq 4\sum_{j = 1}^h (2s_j + p_j+2) + 3|\bounded[\calG^{i-1}, \curve_i]| - 4|\pres[\calG^{i-1}, \curve_i]|
\end{align}

\noindent because each preserved face is crossed by exactly one flarbable sub-curve, $\sum_{j = 1}^h p_j = |\pres[\calG^{i-1}, \curve_i]|$. Therefore, 
\shortfull{$}{\[}
\cost[\calG^{i-1}, \curve_i] \leq 4\sum_{j = 1}^h (2s_j + 2) + 3|\bounded[\calG^{i-1}, \curve_i]| 
=  8h + 8\sum_{j = 1}^h s_j + 3|\bounded[\calG^{i-1}, \curve_i]|\ . 
\shortfull{$}{\]}

Since $h \leq \sqrt{|V_i|}$, we conclude that
\begin{align}
\cost[\calG^{i-1}, \curve_i] \leq 8\sqrt{|V_i|} + 8\sum_{j = 1}^h s_j + 3|\bounded[\calG^{i-1}, \curve_i]|\ .
\label{eqn:costbound}
\end{align}

Next, we upper bound the change in potential $\Phi(\calG^i) - \Phi(\calG^{i-1})$.  
Given a flarbable curve or sub-curve $\gamma$,
let $\calA(\gamma)$ denote the set of $\gamma$-faces. 
Recall that for a $\gamma$-face $f\in \calA(\gamma)$, $f'$ is the modified face of $f$.
Also, let $f_n$ be the new face created by $\calF^i$, i.e., the face of $\calG^i$ bounded by $\curve_i$.
Recall that for each face $f\in \bounded[\calG^{i-1}, \curve_i]$, $f$ is removed and the potential decreases by $\mu(f)\geq1$.
Using this, we can break up the summation to obtain the following:

\begin{align}
\Phi(\calG^i) - \Phi(\calG^{i-1}) 
&= \mu(f_n) +  \lambda \sum_{f \in \calA(\curve_i)} (\mu(f') - \mu(f)) - \lambda \sum_{f\in \bounded[\calG^{i-1}, \curve_i]} \mu(f)\\
&\leq \mu(f_n) +  \lambda \sum_{f \in \calA(\curve_i)} (\mu(f') - \mu(f)) - \lambda |\bounded[\calG^{i-1}, \curve_i]|\ .
\end{align}

\shortfull{}{\noindent} We now break up the first summation by independently considering the large faces in $\calL_i$ and the remaining smaller faces which are crossed by some flarbable sub-curve. Then

\begin{align}
\Phi(\calG^i) - \Phi(\calG^{i-1})\leq  &\ \mu(f_n)+  \lambda \sum_{j = 1}^h  \left( \sum_{f \in \calA(\gamma_j)}\left( \mu(f') -\mu(f) \right)\right) \\
&+ 
\lambda \sum_{f \in \calL_i}\left(\mu(f') - \mu(f)\right)- \lambda |\bounded[\calG^{i-1}, \curve_i]|.
\end{align}

Since each face can gain at most one edge, in particular we know that $\mu(f') - \mu(f) \leq 1$ for each $f\in \calL_i$. Moreover, $\mu(f_n) \leq \sqrt{|V_i|}$ by definition. Thus,
\shortfull{$}{\[}
\Phi(\calG^i) - \Phi(\calG^{i-1}) \leq \sqrt{|V_i|} + \lambda \sum_{j = 1}^h\left( \sum_{f \in \calA(\gamma_j)}\left( \mu(f') - \mu(f) \right)\right) + \lambda|\calL_i|- \lambda |\bounded[\calG^{i-1}, \curve_i]|.
\shortfull{$}{\]}

Note that $\mu(f) = |f|$ for each face $f\in \calA(\gamma_j)$, $1\leq j\leq h$.
Thus, applying Theorem~\ref{thm:delta} to the first summation, we get
\shortfull{$}{\[}
\Phi(\calG^i) - \Phi(\calG^{i-1}) \leq \sqrt{|V_i|} - \frac{\lambda}{2}\sum_{j = 1}^h s_j + \lambda|\calL_i| - \lambda |\bounded[\calG^{i-1}, \curve_i]|.
\shortfull{$}{\]}
Since there can be at most $\sqrt{|V_i|}$ faces of size $\sqrt{|V_i|}$, we know that $|\calL_i|\leq \sqrt{|V_i|}$. Therefore,

\begin{align}
\Phi(\calG^i) - \Phi(\calG^{i-1}) &\leq (\lambda+1) \sqrt{|V_i|} - \frac{\lambda}{2}\sum_{j = 1}^h s_j - \lambda |\bounded[\calG^{i-1}, \curve_i]|\label{eqn:potbound}
\end{align}

Putting~\eqref{eqn:costbound} and~\eqref{eqn:potbound} together, we get that 
\shortfull{$}{\[}
\cost[\calG^{i-1}, \curve_i] + \Phi(\calG^i) - \Phi(\calG^{i-1}) 
\leq (\lambda + 9)\sqrt{|V_i|} +  (8 - \frac{\lambda}{2}) \sum_{j = 1}^h s_j + (3 - \lambda)  |\bounded[\calG^{i-1}, \curve_i]|
\shortfull{$}{\]}

By letting $\lambda$ be a sufficiently large constant (namely $\lambda = 16$), we get that 
\shortfull{$}{\[}
\cost[\calG^{i-1}, \curve_i] + \Phi(\calG^i) - \Phi(\calG^{i-1}) = O(\sqrt{|V_i|}). \qedhere
\shortfull{$}{\]}

\end{proof}

\begin{corollary}\label{corollary:Flarbable sequence}
Let $G$ be a 3-regular plane graph with $\nu$ vertices. 
For a sequence $\scrC = \curve_1, \ldots, \curve_N$ of flarbable fleeqs for graph $G = (V, E)$ where $\nu = |V|$,
\shortfull{$}{\[}
\sum_{i = 1}^N \cost[\calG^{i-1}, \curve_i] = O(\nu + N\sqrt{\nu+ N})
\shortfull{$}{\]}
\end{corollary}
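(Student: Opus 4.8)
The plan is to telescope the single-flarb amortized inequality of Theorem~\ref{thm:amortizeoneflarb}. The key observation is a uniform bound on the vertex counts: by Lemma~\ref{lemma:Net change in vertices after flarb} each flarb adds at most two vertices, so $|V_i| \le \nu + 2i \le \nu + 2N$ for every $i \in [N]$, and hence $\sqrt{|V_i|} \le \sqrt{2}\,\sqrt{\nu + N} = O(\sqrt{\nu+N})$ with an absolute implied constant. Moreover the implied constant in Theorem~\ref{thm:amortizeoneflarb} (which is $\lambda + 9$ for the fixed global choice $\lambda = 16$) does not depend on $i$, so we may sum the bound $\cost[\calG^{i-1},\curve_i] + \Phi(\calG^i) - \Phi(\calG^{i-1}) = O(\sqrt{|V_i|})$ over $i = 1, \ldots, N$ and collect a single $O(N\sqrt{\nu+N})$ term on the right-hand side.

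Summing and telescoping the potential differences gives
\[
\sum_{i=1}^N \cost[\calG^{i-1},\curve_i] \;\le\; \sum_{i=1}^N O\!\left(\sqrt{|V_i|}\right) + \Phi(\calG^0) - \Phi(\calG^N) \;\le\; O\!\left(N\sqrt{\nu+N}\right) + \Phi(G),
\]
where we used $\calG^0 = G$ and $\Phi(\calG^N) \ge 0$. It remains only to bound the starting potential $\Phi(G) = \lambda\sum_{f} \mu(f)$ in terms of $\nu$. Since $\mu(f) \le |f|$ and the sum of face lengths of any plane (multi)graph equals $2|E|$, and since $G$ is $3$-regular so that $2|E| = 3\nu$, we obtain $\Phi(G) \le 2\lambda |E| = 3\lambda\,\nu = O(\nu)$. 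Substituting this in yields $\sum_{i=1}^N \cost[\calG^{i-1},\curve_i] = O(\nu + N\sqrt{\nu+N})$, as claimed.

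There is no substantive obstacle here, since all the real work was done in Lemma~\ref{lemma:Net change in vertices after flarb} and Theorem~\ref{thm:amortizeoneflarb}; the only points requiring care are (i) checking that the per-step $O(\cdot)$ constant is genuinely uniform in $i$, which it is, being assembled from the fixed constant $\lambda$, and (ii) estimating $\Phi(G)$ via $\mu(f)\le|f|$ rather than the crude cap $\mu(f)\le\lceil\sqrt{\nu}\rceil$, since the latter combined with the $\Theta(\nu)$ faces that $G$ may have would only give $\Phi(G) = O(\nu^{3/2})$, which is too weak to yield the stated bound.
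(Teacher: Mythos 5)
Your proof is correct and follows the same route as the paper: telescope the amortized inequality of Theorem~\ref{thm:amortizeoneflarb} over $i=1,\ldots,N$ and then bound the initial potential $\Phi(G)$ and the vertex counts $|V_i|$. You merely supply two details the paper leaves implicit --- the uniform bound $\sqrt{|V_i|}=O(\sqrt{\nu+N})$ via Lemma~\ref{lemma:Net change in vertices after flarb}, and the derivation $\Phi(G)\le\lambda\sum_{f}|f|=2\lambda|E|=3\lambda\nu=O(\nu)$ (correctly noting that the crude cap $\mu(f)\le\lceil\sqrt{\nu}\rceil$ alone would only give the too-weak $O(\nu^{3/2})$) --- both of which are right.
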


\begin{proof}
Using the result of Theorem~\ref{thm:amortizeoneflarb}, we can write
\shortfull{$}{\[}
\sum_{i = 1}^N  \cost[\calG^{i-1}, \curve_i] + \Phi(\calG^N) - \Phi(G) = O(N\sqrt{|V_i|}).
\shortfull{$}{\]}
Because $\Phi(G) = \lambda \sum_{f \in F}\mu(f)$, we know that $\Phi(G) = O(\nu)$. Analogously, since each flarb operation adds at most 2 vertices by Lemma~\ref{lemma:Net change in vertices after flarb}, we know that the number of vertices in $\calG^N$ is $O(\nu+ N)$ which, in turn, implies that $\Phi(\calG^N)  = O(\nu + N)$.
Therefore, 
\shortfull{$}{\[}
\sum_{i = 1}^N  \cost[\calG^{i-1}, \curve_i]   = O(N\sqrt{|V_i|} +  \Phi(G) - \Phi(\calG^N)) = O(\nu + N\sqrt{\nu+N}) \qedhere
\shortfull{$}{\]}
\end{proof}

\section{The lower bound}\label{sec:Lower Bound}
In Section~\ref{sec:Lower Bound}, we present an example of a 3-regular Halin graph $G$ with $\nu$ vertices---a tree with all leaves connected by a cycle to make it 3-regular---and a corresponding flarb operation with cost $\Omega(\sqrt{\nu})$ that yields a graph isomorphic to $G$. Because this sequence can be repeated, the amortized cost of a flarb is $\Theta(\sqrt{\nu})$.

Let $\nu = 2k(k+1) - 2$ for some positive integer $k$. The construction of the 3-regular graph with $\nu$ vertices is depicted in Figure~\ref{fig:LowerBound}. 
In this graph, we show the existence of a flarbable curve $\curve$ (dashed in the figure) such that the flarb operation on $G$ produces a graph $\calG(G, \curve)$ isomorphic to $G$. 
Moreover, $\curve$ crosses at least $k$ augmented $\curve$-faces and $k$ shrinking $\curve$-faces. 
Therefore, $\cost\geq k  = \Omega(\sqrt{\nu})$ by Lemma~\ref{lemma:The cost of a flarb}. 
Since the resulting graph is isomorphic to the original graph, this operation can be repeated in succession an arbitrarily high number times.
That is, there is a sequence of $N$ flarbable curves $\curve_1, \ldots, \curve_N$ such that $\sum_{i = 1}^N  \cost[\calG^{i-1}, \curve_i]  = \Omega (N\sqrt{\nu})$,

\begin{figure}[H]
\centering
\includegraphics{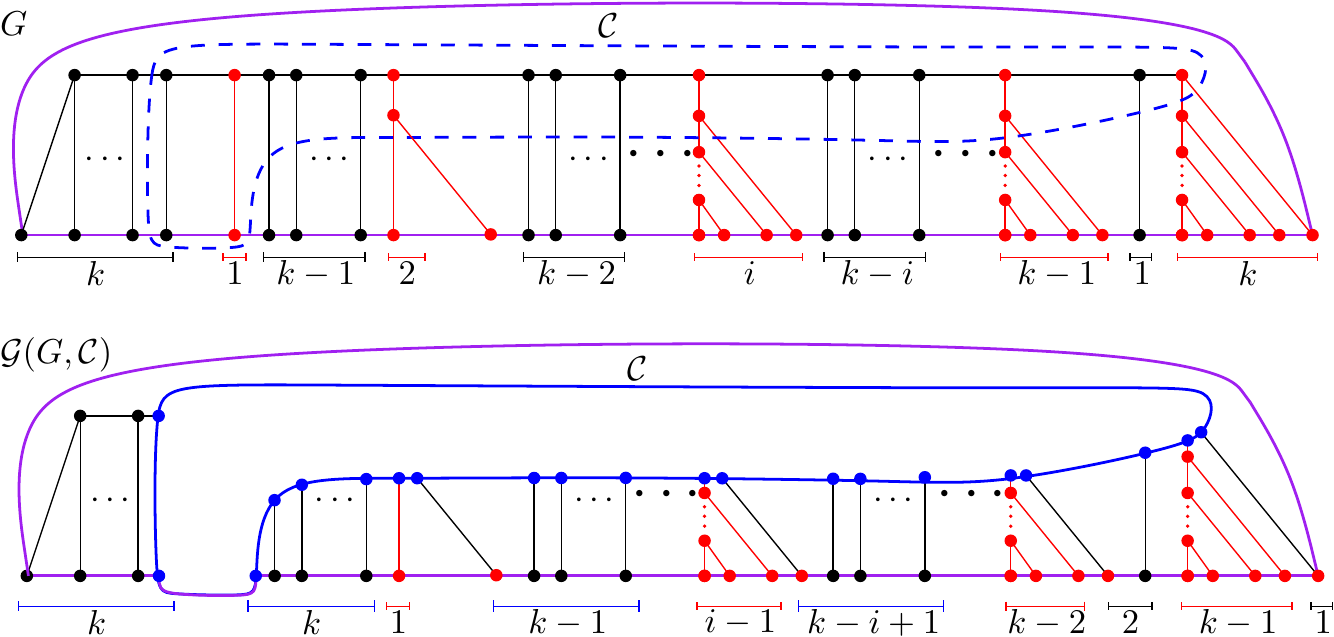}
%[width=1\textwidth]
\caption{\small A 3-regular graph $G$ with $\nu = 2k (k+1) - 2$ vertices. A flarbable curve $\curve$ induces a flarb such that $\calG(G, \curve)$ is isomorphic with $G$.}
\label{fig:LowerBound}
\end{figure}

\section{Computing the flarb}\label{sec:Computing the flarb}

In this section, we describe a data structure to maintain the Voronoi diagram of a set $S$ of $n$ sites in convex position as new sites are added to $S$. 
Our structure allows us to find the edges of each preserved face and ignore them, thereby focusing only on necessary modifications to the combinatorial structure. The time we spend in these operations is then proportional to the number of non-preserved edges. Since this number is proportional to the cost of the flarb, our data structure supports site insertions in time that is almost optimal (up to a polylogarithmic factor).

\subsection{Grappa trees}
Grappa trees~\cite{aronov2006data} are a modification of link-cut trees, a data structure introduced by~\citet{sleator1983data} to maintain the combinatorial structure of trees. They support the creation of new isolated vertices, the \emph{link} operation which adds an edge between two vertices in disjoint trees, and the \emph{cut} operation which removes an edge, splitting a tree into two trees.

We use this structure to maintain the combinatorial structure of the incrementally constructed Voronoi diagram $\V$ of a set $S$ of sites in convex position throughout construction. 
Recall that each insertion defines a flarbable curve $\curve$, namely the boundary of the Voronoi cell of the inserted site.
Our algorithm performs this flarb operation in time $O(\cost[\V, \curve] \log^7 n)$, where $n$ is the number of vertices inserted so far. 
That is, we obtain an algorithm whose running time depends on the minimum number of link and cut operations that the Voronoi diagram, which is a tree, must undergo after each insertion. 
Moreover, this Voronoi diagram answers nearest neighbor queries in $O(\log n)$ time. 

A grappa tree, as introduced by \citet{aronov2006data}, is a data structure is based on the worst-case version of the link-cut tree construction of \citet{sleator1983data}. 
This structure maintains a forest of fixed-topology trees subject to many operations, including {\sc Make-Tree, Link}, and {\sc Cut}, each in $O(\log n)$ worst-case time while using $O(n)$ space.

As in~\cite{aronov2006data,sleator1983data}, we decompose a rooted binary tree into a set of maximal vertex-disjoint downward paths, called \emph{heavy paths}, connected by tree edges called \emph{light edges}.
Each heavy path is in turn represented by a biased binary tree whose leaf-nodes correspond to the vertices of the heavy path. 
Non-leaf nodes represent edges of this heavy path, ordered in the biased tree according to their depth along the path.
Therefore, vertices that are higher (closer to the root) in the path correspond to leaves farther left in the biased tree.
Each leaf node $\ell$ of a biased tree $B$ represents an internal vertex $v$ of the tree which has a unique light edge $l_v$ adjacent to it. 
We keep a pointer from $\ell$ to this light edge. 
Note that the other endpoint of $l_v$ is the root of another heavy path which in turn is represented by another biased tree, say $B'$. 
We merge these two biased trees by adding a pointer from $\ell$ to the root of $B'$.
After merging all the biased trees in this way, we obtain the grappa tree of a tree $T$. 
A node of the grappa tree that is an internal vertex of its biased tree represents a heavy edge and has two children, whereas a node that is a leaf of its biased tree represents a vertex of the heavy path (and its unique adjacent light edge) and has only one child.
By a suitable choice of paths and biasing, as described in~\cite{sleator1983data}, the grappa tree has height $O(\log n)$.

In addition, grappa trees allow us to store left and right marks on each of its nodes, i.e., on each edge of $T$. 
To assign the mark of a node, grappa trees support the $O(\log n)$-time operation $\textsc{Left-Mark}(T,v,m_l)$ which sets the mark $m_l$ to every edge in the path from $v$ to the super root of $T$ ($\textsc{Right-Mark}(T,v,m_l)$ is defined analogously). 
In our setting, we use the marks of an edge $e$ to keep track of the faces adjacent to this edge in a geometric embedding of $T$. Since $T$ is rooted, we can differentiate between the left and the right faces adjacent to $e$.
\shortfull{A full description of the operations supported by grappa trees can be found in the appendix.}{}

The following definition formalizes the operations supported by a grappa tree.

\begin{definition}
 Grappa trees solve the following data-structural problem:
 maintain a forest of rooted binary trees
 with specified topology subject to:
 \begin{description}
 \item [\rm $T$ = Make-Tree$(v)$:] Create a new tree $T$ with a single
   internal vertex~$v$ (not previously in another tree).
 \item [\rm $T$ = Link$(v,w)$:]
   Given a vertex $v$ in one tree $T_v$ and the root $w$ of
   a different tree~$T_w$, connect $v$ and~$w$ and
   merge $T_v$ with $T_w$ into a new tree~$T$.
 \item [\rm $(T_1,T_2)$ = Cut$(e)$:] Delete the existing edge $e = (v,w)$
   in tree~$T$, splitting into $T$ two trees $T_1$ and $T_2$
   containing $v$ and $w$, respectively.
 \item [\rm Evert$(v)$:] Make external node $v$ the root of its tree,
   reversing the orientation (which endpoint is closer to the root)
   of every edge along the root-to-$v$ path.
 \item [\rm Left-Mark$(T,v,m_\ell)$:]
   Set the left mark of every edge on the root-to-$v$ path in $T$ to
   the new mark $m_\ell$, overwriting the previous left marks of these edges.
 \item [\rm Right-Mark$(T,v,m_r)$:]
   Set the right mark of every edge on the root-to-$v$ path in $T$ to
   the new mark $m_r$, overwriting the previous right marks of these edges.
 \item [\rm $(e,m^*_\ell,m^*_r)$ = Oracle-Search$(T,O_e)$:]
   Search for the edge $e$ in tree~$T$.
   The data structure can find $e$ only via \emph{oracle queries}:
   given two incident edges $f$ and $f'$ in~$T$, the provided oracle
   $O_e(f,f',m_\ell,m_r,m'_\ell,m'_r)$ determines in constant time
   which ``side'' of $f$ contains $e$, i.e., whether $e$ is in the component
   of $T - f$ that contains~$f'$, or in the rest of the tree
   (which includes $f$ itself).
   The data structure provides the oracle with the left mark $m_\ell$
   and the right mark $m_r$ of edge $f$, as well as the left mark $m'_\ell$
   and the right mark $m'_r$ of edge $f'$, and at the end, it returns
   the left mark $m^*_\ell$ and the right mark $m^*_r$ of the found edge~$e$.
 \end{description}
\end{definition}

\begin{theorem}\label{thm:grappa trees}
[Theorem 7 from~\cite{aronov2006data}]
A grappa tree maintains the combinatorial structure of a forest and supports each operation described above in $O(\log n)$ worst-case time per operation, where $n$ is the total size of the trees affected by the operation.
\end{theorem}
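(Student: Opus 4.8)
The statement to prove is Theorem~\ref{thm:grappa trees}, which is cited as Theorem 7 from~\cite{aronov2006data}. Since this is an imported result, a full self-contained proof is not expected here; the plan is to sketch how the grappa-tree structure is assembled from standard components and why each operation runs in $O(\log n)$ worst-case time.

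The plan is to build on the worst-case link-cut tree machinery of Sleator and Tarjan~\cite{sleator1983data}, whose representation has already been recalled in the preceding paragraphs: decompose the rooted binary tree into heavy paths joined by light edges, and represent each heavy path by a \emph{biased} binary tree whose leaves are the path vertices and whose internal nodes are the path edges, ordered by depth. First I would verify that the path-and-biasing scheme guarantees $O(\log n)$ depth for the composite structure: each root-to-leaf walk alternates between descending a biased tree (cost $O(\log(\text{path weight}))$) and crossing a light edge, and along any root-to-node path in $T$ there are $O(\log n)$ light edges, with the biased-tree weights chosen so that the telescoping sum of logarithms is $O(\log n)$; this is exactly the analysis in~\cite{sleator1983data}. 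Next I would treat {\sc Make-Tree}, {\sc Link}, {\sc Cut}, and {\sc Evert} as the standard link-cut primitives: each reduces to a constant number of biased-tree splits and joins plus local relinking of light-edge pointers, each of which is $O(\log n)$ worst case in the biased-tree implementation; {\sc Evert} additionally flips an orientation bit, handled lazily as in~\cite{sleator1983data}.

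The genuinely new content over plain link-cut trees is the mark bookkeeping and the {\sc Oracle-Search} operation, so that is where I would spend the real effort. For {\sc Left-Mark} and {\sc Right-Mark}, I would observe that the root-to-$v$ path in $T$ is covered by $O(\log n)$ contiguous sub-paths, each an interval of leaves within a single biased tree; setting one mark uniformly over such an interval is a standard lazy-propagation update (store a pending-mark field at biased-tree nodes, push down on descent), so the total cost is $O(\log n)$. For {\sc Oracle-Search}, the idea is to do a guided descent: starting at the super root, at each step we are at a node of some biased tree representing an edge $f$, and we call the oracle $O_e$ with $f$ and an incident edge $f'$ (a child edge in $T$) together with the four marks, which are available in $O(1)$ because marks are stored at the nodes (pushing down any pending marks as we go). The oracle answer tells us which of the two sides of $f$ contains $e$, which in turn tells us which child in the biased tree — or which light edge to follow into a child heavy path — to recurse into. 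Since every such step makes progress either down a biased tree or across a light edge, the descent terminates after $O(\log n)$ steps and oracle calls, and when it reaches the leaf/node representing $e$ we read off $m^*_\ell, m^*_r$ directly.

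The main obstacle — and the reason this result is non-trivial rather than a routine corollary of~\cite{sleator1983data} — is making the oracle-guided descent interact correctly with the heavy-path/light-edge decomposition and with orientation and pending-mark state. One must be careful that "which side of $f$ contains $e$" is translated consistently across the boundary between a biased tree and the light edge hanging off one of its leaves, that {\sc Evert}'s orientation reversals are reconciled with the left/right distinction of the marks (an everted segment swaps the roles of left and right marks), and that lazy marks are flushed along exactly the path the search visits so the oracle always sees correct values. These are precisely the invariants maintained in the proof of Theorem 7 of~\cite{aronov2006data}; here it suffices to invoke that theorem, with the above sketch indicating why the claimed $O(\log n)$ worst-case bound per operation and $O(n)$ space hold.
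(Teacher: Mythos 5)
The paper does not prove this statement at all—it is imported verbatim as Theorem~7 of \cite{aronov2006data}, with no argument given—so there is no in-paper proof to compare against. Your proposal correctly recognizes this and offers a sketch that is consistent with the paper's own exposition of grappa trees (heavy-path decomposition into biased trees, lazy left/right mark propagation over leaf intervals, oracle-guided descent alternating between biased-tree steps and light-edge crossings); the identification of the marks and {\sc Oracle-Search} as the new content over plain Sleator--Tarjan link-cut trees, and of {\sc Evert}'s left/right swap and lazy-mark flushing as the places requiring care, matches how the cited result is actually established. Invoking the external theorem, as you do, is exactly what the paper itself does.
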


\subsection{The Voronoi diagram}

Let $S$ be a set of $n$ sites in convex position and let $\V$ be the binary tree representing the Voronoi diagram of $S$. We store $\V$ using a grappa tree. 
In addition, we assume that each edge of $\V$ has two \emph{face-markers}: its left and right markers which store the sites of $S$ whose Voronoi regions are adjacent to this edge on the left and right side, respectively.
While a grappa tree stores only the topological structure of $\V$, with the aid of the face-markers we can retrieve the geometric representation of $\V$. 
Namely, for each vertex $v$ of $\V$, we can look at its adjacent edges and their face-markers to retrieve the point in the plane representing the location of $v$ in the Voronoi diagram of $S$ in $O(1)$ time. 
Therefore, we refer to $v$ also as a point in the plane.
Recall that each vertex $v$ of $\V$ is the center of a circle that passes through at least three sites of $S$, we call these sites the \emph{definers} of $v$ and we call this circle the \emph{definer circle} of $v$.

\begin{observation}\label{obs:New site in circles}
Given a new site $q$ in the plane such that $S' = S\cup \{q\}$ is in convex position, the vertices of $\V$ that are closer to $q$ than to any other point of $S'$ are exactly the vertices whose definer circle encloses~$q$.
\end{observation}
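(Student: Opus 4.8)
The plan is to prove both inclusions of the equality of vertex sets. Let $v$ be a vertex of $\V$ with definer circle $C_v$ centered at $v$ passing through (at least) three definers of $S$; since these are the sites of $S$ nearest to $v$, the radius of $C_v$ equals the distance from $v$ to its nearest site in $S$. First I would show that if $C_v$ encloses $q$, then $v$ is closer to $q$ than to any site of $S'$. Indeed, if $q$ lies strictly inside $C_v$ then $\mathrm{dist}(v,q)$ is strictly less than the radius of $C_v$, which in turn equals $\mathrm{dist}(v,S)$, the distance from $v$ to the closest site of $S$; hence $q$ is strictly closer to $v$ than every site of $S$, so $q$ is the unique nearest site of $S' = S \cup \{q\}$ to $v$. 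In particular $v$ is closer to $q$ than to any other point of $S'$.

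For the converse, suppose $v$ is closer to $q$ than to any other site of $S'$. Then in particular $\mathrm{dist}(v,q) < \mathrm{dist}(v,s)$ for every $s \in S$ (the inequality is strict because $q \neq s$ and $v$ is strictly closest to $q$ among all of $S'$). Taking $s$ to be a definer of $v$ gives $\mathrm{dist}(v,q) < \mathrm{dist}(v,s) = $ radius of $C_v$, so $q$ lies strictly inside $C_v$, i.e. $C_v$ encloses $q$. Combining the two directions, the set of vertices of $\V$ closer to $q$ than to any other point of $S'$ is exactly the set of vertices whose definer circle encloses $q$.

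\textbf{Main obstacle.} The argument itself is essentially a one-line distance comparison once the right fact is isolated, so the only real care needed is the bookkeeping around the word ``definer'': one must invoke that the definers of $v$ are precisely a subset of the sites of $S$ realizing the minimum distance from $v$ to $S$, so that the radius of the definer circle equals $\mathrm{dist}(v,S)$. This is immediate from the definition of the Voronoi diagram (a Voronoi vertex is equidistant from its defining sites and no closer to any other site), but it should be stated explicitly. A secondary subtlety is the handling of strict versus non-strict inequalities: the hypothesis ``closer to $q$ than to any other point of $S'$'' is a strict inequality, and since $S'$ is in convex position (hence $q$ lies on no circle through three sites of $S$, by the implicit general-position assumption) there is no degenerate case where $q$ lies exactly on $C_v$ to worry about — but even without that assumption the statement as phrased (``encloses'') is consistent with the strict inequalities used, so no delicate case analysis is required.
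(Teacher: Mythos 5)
Your proof is correct and is the natural distance-comparison argument; the paper states this as an Observation without providing a proof at all, so there is no alternative argument to compare against. The key fact you correctly isolate — that the radius of a Voronoi vertex's definer circle equals the distance from that vertex to its nearest site in $S$ — is exactly what makes both inclusions immediate, and your handling of the strictness of ``encloses'' versus ``closer than'' is also sound.
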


Let $q$ be a new site such that $S' = S\cup \{q\}$ is in convex position.
Let $\rr{q}{S'}$ be the Voronoi region of~$q$ in the Voronoi diagram of~$S'$ and let $\rrb{q}{S'}$ denote its boundary.
Recall that we can think of $\V$ as a Halin graph by connecting all its leaves by a cycle to make it 3-regular. While we do not explicitly use this cycle, we need it to make our definitions consistent. 
In this Halin graph, the curve $\rrb{q}{S'}$ can be made into a closed curve by going around the leaf of $\V$ contained in $\rr{q}{S'}$, namely the point at infinity of the bisector between the two neighbors of $q$ along the convex hull of $S'$.
In this way, $\rrb{q}{S'}$ becomes a flarbable curve.
Therefore, we are interested in performing the flarb operation it induces which leads into a transformation of $\V$ into~$\mathcal V(S')$.

\subsection{Heavy paths in Voronoi diagrams}

Recall that for the grappa tree of $\V$, we computed a heavy path decomposition of $\V$.
In this section, we first identify the portion of each of these heavy paths that lies inside $\rr{q}{S'}$. 
Once this is done, we test if any edge adjacent to an endpoint of these paths is preserved. 
Then within each heavy path, we use the biased trees built on it to further find whether there are non-preserved edges on this heavy path. 
After identifying all the non-preserved edges, we remove them, which results in a split of $\V$ into a forest where each edge in $\rr{q}{S'}$ is preserved. 
Finally, we show how to link the disjoint components back to the tree resulting from the flarb operation. 

We first find the heavy paths of $\V$ whose roots lie in $\rr{q}{S'}$. 
Additionally, we find the portion of each of these heavy paths that lies inside $\rr{q}{S'}$. 

Recall that there is a leaf $\rho$ of $\V$ that lies in $\rr{q}{S'}$: the point at infinity of the bisector between the two neighbors of $q$ along the convex hull of $S'$. As a first step, we root $\V$ at $\rho$ by calling Evert$(\rho)$. In this way, $\rho$ becomes the root of $\V$ and all the heavy paths have a \emph{root} which is their endpoint closest to $\rho$.

Let $R$ be the set the of roots of all heavy paths of $\V$, and let $R_q = \{r\in R : r\in \rr{q}{S'}\}$.
We focus now on computing the set $R_q$. 
By Observation~\ref{obs:New site in circles}, each root in $R_q$ has a definer circle that contains~$q$. We use a dynamic data structure that stores the definer circles of the roots in $R$ and returns those circles containing a given query point efficiently.

\begin{lemma}\label{lemma:Chan data structure}
There is a fully dynamic $O(n)$-space data structure to store a set of circles (not necessarily with equal radii) that can answer queries of the form: Given a point $q$ in the plane, return a circle containing~$q$, where insertions take $O(\log^3 n)$ amortized time, deletions take $O(\log^6 n)$ amortized time, and queries take $O(\log^2 n)$ worst-case time.
\end{lemma}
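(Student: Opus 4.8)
The plan is to reduce the problem of storing a set of circles under a point-enclosure query to a point-location/membership problem in $\mathbb{R}^3$ via the standard lifting map, and then invoke Chan's dynamic convex-hull / halfspace machinery. Recall that a circle $C$ in the plane with center $c$ and radius $r$ lifts, under the paraboloid map $\pi\colon (x,y)\mapsto(x,y,x^2+y^2)$, to a plane $h_C$ in $\mathbb{R}^3$; a point $q$ lies inside $C$ if and only if the lifted point $\pi(q)$ lies strictly below $h_C$. Thus ``$q$ is contained in some circle of the set'' is equivalent to ``$\pi(q)$ lies below the upper envelope of the corresponding set of planes,'' i.e.\ $\pi(q)$ lies outside the intersection of the upper halfspaces bounded by the $h_C$'s. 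So I would maintain the set of planes $\{h_C\}$ and, on a query $q$, ask whether $\pi(q)$ is below at least one plane, and if so return (a witness plane, hence) the corresponding circle. This is precisely a \emph{dynamic halfspace-reporting / extreme-point} type query on a set of planes in $\mathbb{R}^3$.

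The key step is to cite the appropriate dynamic data structure of Chan. I would invoke the randomized dynamic convex hull structure of~\citet{chan2010dynamic}: it maintains $n$ points (dually, planes) in $\mathbb{R}^3$ in $O(n)$ space, supports insertions in $O(\log^3 n)$ amortized time and deletions in $O(\log^6 n)$ amortized time, and answers extreme-point / ray-shooting queries (here: ``does the vertical line through $\pi(q)$ cross the lower envelope of $\{h_C\}$ above $\pi(q)$, and if so on which plane?'') in $O(\log^2 n)$ time. A single such query either certifies that $\pi(q)$ is above every plane (so $q$ lies in no circle) or produces one plane $h_C$ strictly above it, and we return the circle $C$; this is exactly the interface claimed in the statement, with the stated time bounds inherited directly.

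Concretely, the steps are: (1) set up the lifting map and verify the in-circle $\Leftrightarrow$ below-plane equivalence (a one-line algebraic identity); (2) observe that the query ``return a circle containing $q$'' becomes ``return a plane $h_C$ with $\pi(q)$ strictly below $h_C$, if any,'' which is answered by a single vertical ray-shooting query against the lower envelope of the planes; (3) instantiate the dynamic three-dimensional convex-hull structure of~\cite{chan2010dynamic} on the planes $\{h_C\}$, which supports such queries in $O(\log^2 n)$ worst-case time, insertions in $O(\log^3 n)$ amortized, deletions in $O(\log^6 n)$ amortized, in $O(n)$ space; (4) translate circle insertions/deletions into plane insertions/deletions, preserving the time bounds. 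The main obstacle is not conceptual but bookkeeping: one must make sure that the query supported by Chan's structure really returns a \emph{witness} plane (not just a yes/no answer), and that the vertical ray-shooting variant is the one with the $O(\log^2 n)$ worst-case (rather than amortized) query bound; both follow from the standard primal–dual correspondence between lower-envelope ray shooting and the dynamic convex-hull queries analyzed in~\cite{chan2010dynamic}, but it is worth stating explicitly that the structure can report the facet/plane hit, since that plane is what yields the circle we must output.
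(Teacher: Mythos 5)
Your proposal follows essentially the same route as the paper: paraboloid lifting to turn circles into planes, the in-circle $\Leftrightarrow$ below-plane equivalence, and then Chan's dynamic 3D convex hull structure (which, as the paper makes explicit by dualizing planes to points and the lifted query point to a plane, answers the request via a single extreme-point query). One slip: in your second paragraph you phrase the query as ray shooting against the \emph{lower} envelope of the planes, but your own first paragraph correctly identifies the relevant region as below the \emph{upper} envelope (equivalently: ``is the maximum of $h_C(x_q,y_q)$ above $\pi(q)$?''); crossing the lower envelope above $\pi(q)$ would instead certify that $q$ lies in \emph{every} circle, which is the wrong predicate. After replacing ``lower'' by ``upper'' (and, as the paper does, passing to the dual so the query is an extreme-point query on the convex hull of the dual points in the direction orthogonal to the dualized query plane), the argument matches the paper's proof.
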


\begin{proof}
\citet{chan2010dynamic} presented a fully dynamic randomized data structure that can answer queries about the convex hull of a set of $n$ points in three dimensions where insertions take $O(\log^3 n)$ amortized time, deletions take $O(\log^6 n)$ amortized time, and extreme-point queries take $O(\log^2 n)$ worst-case time. We use this structure to solve our problem, but first, we must transform our input into an instance that can be handled by this data structure.

Let $\mathscr C$ be the dynamic set of circles we want to store.
Consider the paraboloid-lifting which maps every point $(x,y) \to (x, y, x^2+ y^2)$. Using this lifting, we identify each circle $C\in \mathscr C$ with a plane $\pi_C$ in $\mathbb{R}^3$ whose intersection with the paraboloid projects down as $C$ in the $xy$-plane. 
Moreover, a point $q = (x,y)$ lies inside~$C$ if and only if point $(x,y, x^2+y^2)$ lies below the plane $\pi_C$. 

Let $\Pi = \{\pi_C : C\in \mathscr C\}$ be the set of planes corresponding to the circles in $\mathscr C$.
In the above setting, our query can be translated as follows:
Given a point $q' = (x,y, x^2 + y^2)$ on the paraboloid, find a plane $\pi_C\in \Pi$ that lies above $q'$. 

Using standard point-plane duality in $\mathbb{R}^3$, we can map the set of planes $\Pi$ to a point set $\Pi^*$, and a query point $q'$ to a plane $q^*$ such that a query translates to a \emph{plane query}:
Given a query plane $q^*$, find a point of $\Pi^*$ that lies below it. 

Using the data structure introduced by \citet{chan2010dynamic} to store $\Pi^*$, we can answer plane queries as follows. Consider the direction orthogonal to $q^*$ pointing in the direction below $q^*$. Then, find the extreme point of the convex hull of $\Pi^*$ in this direction in $O(\log^2 n)$ time. 
If this extreme point lies below $q^*$, return the circle of $\mathscr C$ corresponding to it.
Otherwise, we return that no point of $\Pi^*$ lies below $q^*$, which implies that no circle of $\mathscr C$ contains $q$.
Insertions take $O(\log^3 n )$ time while removals from the structure take $O(\log^6 n)$ time. 
\end{proof}

For our algorithm, we store each root in $R$ into the data structure given by Lemma~\ref{lemma:Chan data structure}.  Using this structure, we obtain the following result.

\begin{lemma}\label{lemma:Computing roots in polylog}
We can compute each root in $R_q$ in total $O(|R_q| \log^6 n)$ amortized time. 
\end{lemma}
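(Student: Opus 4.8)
The plan is to find the set $R_q$ by repeated querying of the dynamic data structure of Lemma~\ref{lemma:Chan data structure}, removing each discovered root from the structure so that subsequent queries return a fresh root. Concretely, I would query the structure with the point $q$; by Observation~\ref{obs:New site in circles} together with the correctness of Lemma~\ref{lemma:Chan data structure}, a returned circle corresponds to a root whose definer circle encloses $q$, i.e.\ an element of $R_q$. I delete that root from the structure and repeat. The iteration stops when a query reports that no stored circle contains $q$, at which point every remaining root in $R$ has a definer circle missing $q$, so $R_q$ has been computed exactly. This performs $|R_q|+1$ queries and $|R_q|$ deletions.

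For the running time, each query costs $O(\log^2 n)$ worst case and each deletion costs $O(\log^6 n)$ amortized, by Lemma~\ref{lemma:Chan data structure}; summing over the $|R_q|+1$ iterations gives $O(|R_q|\log^6 n)$ amortized total time, as claimed. One subtlety is that after processing the insertion of $q$ we will want the data structure to again store all roots of the \emph{new} heavy-path decomposition of $\mathcal V(S')$; but that bookkeeping is part of the enclosing update procedure and is charged separately (each new or changed heavy path contributes reinsertions, again at $O(\log^3 n)$ per root), so it does not affect the bound stated in this lemma, which concerns only the extraction of $R_q$.

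The only real point requiring care is arguing that the query/delete loop indeed enumerates \emph{all} of $R_q$ and nothing more. This follows because the data structure of Lemma~\ref{lemma:Chan data structure} answers correctly against the \emph{current} contents: as long as at least one stored circle contains $q$ it returns one such circle, and it reports ``none'' precisely when no stored circle contains $q$. Since we only ever remove circles that do contain $q$ (elements of $R_q$), the invariant ``the stored set is $R \setminus (\text{already-found subset of } R_q)$'' is maintained, so the loop terminates exactly after the last element of $R_q$ is extracted. Thus the loop is correct, and the dominant cost is the $O(|R_q|)$ deletions at $O(\log^6 n)$ amortized each, giving the stated bound.
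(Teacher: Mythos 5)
Your proposal is correct and follows exactly the paper's approach: repeatedly query the structure of Lemma~\ref{lemma:Chan data structure} for a root whose definer circle contains $q$, delete it, and stop when the query reports none, at a cost dominated by the $O(\log^6 n)$ amortized deletions. Your additional remarks on loop correctness and on reinsertion bookkeeping are sound and merely make explicit what the paper leaves implicit.
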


\begin{proof}
After querying for a root whose definer circle contains $q$, we remove it from the data structure and query it again to find another root with the same property until no such root exists.
Since queries and removals take $O(\log^2 n)$ and $O(\log^6 n)$ time, respectively, we can find all roots in $R_q$ in $O(|R_q| \log^6 n)$ time.
\end{proof}

Given a root $r\in R$, let $h_r$ be the heavy path whose root is $r$.
Because the portion of $\V$ that lies inside $\rr{q}{S'}$ is a connected subtree, we know that, for each $r\in R_q$, the portion of the path $h_r$ contained in $\rr{q}{S'}$ is also connected.
In order to compute this connected subpath, we want to find the last vertex of $h_r$ that lies inside of $\rr{q}{S'}$, or equivalently, the unique edge of $h_r$ having exactly one endpoint in the interior of $\rr{q}{S'}$. We call such an edge the \emph{$q$-transition edge} of $h_r$ (or simply transition edge).

\begin{lemma}\label{lemma:Finding connected subpath in Voronoi}
For a root $r\in R_q$, we can compute the transition edge of $h_r$ in $O(\log n)$ time.
\end{lemma}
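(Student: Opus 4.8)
The plan is to reduce the computation to a binary search in the biased tree $B$ representing $h_r$, guided by the membership test of Observation~\ref{obs:New site in circles}. First, note the predicate ``$v$ lies in $\rr{q}{S'}$'' is monotone along $h_r$: by Observation~\ref{obs:New site in circles} it holds iff the definer circle of $v$ encloses $q$, and since $r\in R_q$ its root $r$ lies in $\rr{q}{S'}$ while the portion of $\V$ inside $\rr{q}{S'}$ is a connected subtree, so the vertices of $h_r$ lying in $\rr{q}{S'}$ form a nonempty prefix of $h_r$ (ordered from $r$ downward) and all remaining vertices lie outside. Hence the transition edge of $h_r$ is precisely the edge joining the deepest ``inside'' vertex to the first ``outside'' vertex, and it is unique. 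The prefix is proper --- so this edge really does lie on $h_r$ --- because the deepest vertex of $h_r$ is a leaf of $\V$ distinct from $\rho$, and among the leaves of $\V$ only $\rho$ lies in $\rr{q}{S'}$ (every other leaf is the endpoint at infinity of an unbounded Voronoi edge that survives the insertion).

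Next I would locate this edge by descending from the root of $B$. Recall $B$ has height $O(\log n)$, its leaves are the vertices of $h_r$ in depth order, and its internal nodes are the edges of $h_r$ ordered by depth, so the left (resp.\ right) subtree of an internal node holds the shallower (resp.\ deeper) portion of $h_r$. At an internal node $x$ representing an edge $e=(u,v)$ with $u$ the shallower endpoint, I test whether $u$ and $v$ lie in $\rr{q}{S'}$. By the monotonicity above exactly one of three cases occurs: $v$ lies in $\rr{q}{S'}$, so the transition edge is strictly below $e$ and I recurse into the right subtree of $x$; $u$ lies outside $\rr{q}{S'}$, so the transition edge is strictly above and I recurse into the left subtree; or $u$ is inside and $v$ outside, in which case $e$ is the transition edge and I stop. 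The descent visits $O(\log n)$ nodes, so provided each membership test is $O(1)$ the total time is $O(\log n)$.

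The remaining ingredient is the constant-time test ``does an endpoint $w$ of $e$ lie in $\rr{q}{S'}$?''. Since $\V$ is $3$-regular, $w$ has three incident edges whose left/right face-markers are exactly the three definers of $w$, and $w$ is the circumcenter of these definers; two of them are the face-markers of $e$ itself (stored at node $x$), and the third is a face-marker of one of the other two edges incident to $w$ --- its neighbouring heavy-path edge or its light edge. I would therefore maintain, as I descend, $O(1)$-time access to the heavy-path edges bracketing the current biased subtree from above and from below, updating them in $O(1)$ per step (on a left move the current edge $e$ becomes the new lower bracket, on a right move it becomes the new upper bracket), together with the light-edge pointers at the corresponding leaves of $B$; from these I recover the third definer, hence $w$'s location and its definer circle, and test whether $q$ lies in it. Leaves of $\V$, being points at infinity, are handled as a degenerate $O(1)$ geometric case.

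The hard part is exactly this last step: pinning down that the neighbouring face-markers needed to reconstruct a definer circle at each visited node are reachable in $O(1)$ (amortized) time within the heavy-path / biased-tree / face-marker representation, i.e.\ that the $O(1)$-per-step bookkeeping of bracketing edges suffices. The monotonicity, the correctness of the three-way branch, the uniqueness and existence of the transition edge, and the $O(\log n)$ bound on the number of visited nodes are all immediate.
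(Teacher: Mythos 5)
Your high-level plan---test membership in $\rr{q}{S'}$ via definer circles, exploit the fact that inside-ness is monotone along $h_r$, and binary-search for the crossing edge---is the right idea, and the monotonicity observation is a nice explicit simplification. But the route you take is mechanically different from the paper's, and the part you flag as ``the hard part'' is a genuine gap, not just a bookkeeping detail to be filled in.

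The paper does not descend the biased tree $B$ by hand. It invokes the grappa tree's built-in $\textsc{Oracle-Search}$ operation (Theorem~\ref{thm:grappa trees}), whose interface hands the oracle, at every step, \emph{two incident edges} $f, f'$ together with all four of their face-marks $m_\ell, m_r, m'_\ell, m'_r$. Since $f$ and $f'$ share a vertex $v$ of degree $3$, the four marks contain exactly the three definers of $v$, so the circumcircle test against $q$ is $O(1)$ with no extra state. Your proposal instead works one internal node of $B$ at a time, where the stored marks give only two of the three definers of each endpoint of the edge $e=(u,v)$ at that node, and you try to recover the third via ``bracketing'' edges updated along the descent. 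This does not work as stated: when you are at an internal node $x$, the edge of $h_r$ immediately above $u$ is the rightmost internal node of $x$'s \emph{left subtree} (a descendant of $x$), not one of the ancestors you have passed through; likewise the edge immediately below $v$ is the leftmost internal node of $x$'s right subtree. So the brackets you maintain along the root-to-$x$ path are the edges bounding $x$'s entire subtree, not the immediate neighbours of $u$ and $v$, and the light-edge pointer of $u$ (or $v$) lives at a leaf that is $\Theta(\log n)$ below $x$. You would need the predecessor/successor edge or a light-edge pointer in $O(1)$ from an arbitrary internal node, which the plain biased tree does not give you. That is precisely the gap that $\textsc{Oracle-Search}$ is engineered to close: its traversal is arranged so that an adjacent pair $(f,f')$ with marks is always in hand when the oracle is called. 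So the correct fix is not to refine your bracket bookkeeping, but to phrase the membership test as an oracle $O_{e_r}(f,f',m_\ell,m_r,m'_\ell,m'_r)$ and let $\textsc{Oracle-Search}$ do the $O(\log n)$-step navigation, which is exactly what the paper does.
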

\begin{proof}
Let $e_r$ be the transition edge of $h_r$. 
We make use of the oracle search proper of a grappa-tree to find the edge $e_r$. To this end, we must provide the data structure with an oracle such that: given two incident edges $f$ and $f'$ in $\V$, the oracle determines in constant time which side of $f$ contains the edge $e_r$, i.e., whether $e_r$ is in the component of $\V \setminus f$ that contains $f'$, or in the rest of the tree (which includes $f$ itself).
The data structure provides the oracle with the left and the right marks of $f$ and $f'$.
Given such an oracle, a grappa tree allows us to find the edge $e_r$ in $O(\log n )$ time\shortfull{.}{ by Theorem~\ref{thm:grappa trees}.}

Given two adjacent edges $f$ and $f'$ of $\V$ that share a vertex $v$, we implement the oracle described above as follows.
Recall that the left and right face-marks of $f$ and $f'$ correspond to the sites of $S$ whose Voronoi region is incident to the edges $f$ and $f'$. 
Thus, we can determine the definers of the vertex $v$, find their circumcircle, and test whether $q$ lies inside it or not in constant time. Thus, by Observation~\ref{obs:New site in circles}, we can test in $O(1)$ time whether $v$ lies in $R_q$ or not and hence, decide if $e_r$ is in the component of $\V \setminus f$ that contains~$f'$, or in the rest of the tree.
\end{proof}

\subsection{Finding non-preserved edges.}

\begin{observation}\label{obs:Testing reincarnation}
Given a 3-regular graph $G$ and a flarbable curve $\curve$, if we can test whether a point is enclosed by $\curve$ in $O(1)$ time, then we can test whether an edge is preserved  in $O(1)$ time.
\end{observation}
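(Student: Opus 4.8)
The plan is to reduce the preservation test for an edge to a constant-depth inspection of $G$ near that edge, using the point-in-$\curve$ oracle only $O(1)$ times. As a preliminary step I would record how preservation of a single $\curve$-face is visible on its boundary. Since $\curve$ passes through no vertex and meets each edge in at most one point, an edge $e$ is a fleeq-edge iff exactly one of its endpoints lies in $\ins$, it lies strictly inside $\curve$ iff both endpoints lie in $\ins$, and strictly outside iff both lie in $\out$; each of these is decided with at most two oracle calls. Moreover, flarbability condition~(4) forces the boundary cycle of any $\curve$-face $f$ to decompose as $e_i$, (inside arc), $e_{i+1}$, (outside arc), where $e_i,e_{i+1}$ are the only two fleeq-edges on $\partial f$ and the inside (resp.\ outside) arc is a path whose edges all lie strictly inside (resp.\ outside) $\curve$. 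Writing $\ell$ for the length of the outside arc, the inside arc has $|f|-2-\ell$ edges; after the flarb the outside arc is unchanged, the two fleeq-edges survive (each shortened to a single edge), and exactly one $\curve$-edge is added, so $|f'|=\ell+3$. Comparing with Definition~\ref{def:preserved}, $f$ is preserved exactly when its inside arc has length $1$ (it is augmented if the length is $0$, shrinking if it is at least $2$).

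With this in hand I would do a case analysis on a query edge $e=(u,v)$, after testing each endpoint with the oracle. If exactly one endpoint, say $u$, lies in $\ins$, then $e$ is a fleeq-edge, both faces incident to $e$ are $\curve$-faces, and $e$ is preserved iff at least one of them is a preserved $\curve$-face. To decide whether a face $f$ incident to $e$ is preserved I would walk along $\partial f$ from $e$ towards $u$ and examine the next edge $a$: if $a$ is a fleeq-edge then, by condition~(4), $a=e_{i+1}$, so the inside arc is empty and $f$ is augmented, hence not preserved; otherwise $a$ is the first edge of the inside arc, and I examine the edge $b$ following $a$ along $\partial f$ --- then $f$ is preserved iff $b$ is a fleeq-edge, i.e.\ the inside arc equals $\{a\}$. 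This touches $O(1)$ edges and makes $O(1)$ oracle calls, and I repeat it for the other incident face. If both endpoints of $e$ lie in $\ins$, then $e$ is strictly inside $\curve$, so whenever $e$ lies on a preserved $\curve$-face it must be that face's whole inside arc; hence $e$ is preserved iff for one of the two faces $f$ incident to $e$ both edges of $\partial f$ adjacent to $e$ are fleeq-edges (which, by condition~(4), forces that inside arc to be $\{e\}$), again an $O(1)$ test. Finally, if both endpoints of $e$ lie in $\out$, then $e$ is not crossed by $\curve$ and none of the three steps of Definition~\ref{def: flarb} modifies it, so $e$ needs no link or cut; in every application of this observation the queried edge is incident to $\ins$, so this last case is immaterial.

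The step that will need the most care is the boundary bookkeeping of the preliminary paragraph. Condition~(4) is precisely what guarantees that a $\curve$-face has exactly two fleeq-edges on its boundary, and this is what licenses the two key inferences used in the case analysis --- ``$a$ a fleeq-edge $\Rightarrow a=e_{i+1}$'' and ``two fleeq-neighbours of $e$ along $\partial f$ $\Rightarrow$ the inside arc of that $\curve$-face is $\{e\}$'' --- so that a constant-depth walk cannot miss the relevant fleeq-edge. The length comparison of $|f|$ with $|f'|$ also has to be stated carefully so that ``inside arc of length one'' is genuinely equivalent to ``$|f|=|f'|$''. Once those are nailed down, each branch of the analysis is a bounded number of planar-graph navigation steps (reading off the next edge of a given face at a given vertex, $O(1)$ in the standard representation) plus a bounded number of point-in-$\curve$ tests, which gives the claimed $O(1)$ time.
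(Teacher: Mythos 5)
Your proof is correct and takes essentially the same approach as the paper's far terser two-sentence argument: both reduce the preservation test to checking fleeq-membership of a bounded number of edges adjacent to the query edge, each decided with $O(1)$ oracle calls on endpoints. Your version spells out the case analysis and derives from Definition~\ref{def:preserved} and condition~(4) that a $\curve$-face is preserved iff its inside arc has length one, which the paper merely asserts when introducing the notion of a reappearing edge.
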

\begin{proof}
First note that we can test  in $O(1)$ time whether an edge reappears by testing whether its two adjacent edges are fleeq-edges. Since a preserved edge is either an edge that reappears or a fleeq-edge adjacent to an edge that reappears, this takes only $O(1)$ time.
\end{proof}

Let $\Vq$ be the subtree induced by all the edges of $\V$ that intersect $\rr{q}{S'}$. 
Now, we work towards showing how to identify each non-preserved edge of $\Vq$ in the fleeq induced by $\rrb{q}{S'}$. 
For each root $r\in R_q$, we compute the transition edge $e_r$ of $h_r$ using Lemma~\ref{lemma:Finding connected subpath in Voronoi} in $O(\log n)$ time per edge.
Assume that $w$ is the vertex of $e_r$ that is closer to $r$ (or is equal to $r$). 
We consider each edge adjacent to $w$ and test whether it is preserved. 
Since each vertex of $\Vq$ has access to its definers via the face markers of its incident edges, we can test if this vertex lies in $\rr{q}{S'}$.
Thus, by Observation~\ref{obs:Testing reincarnation}, we can decide whether an edge of $\Vq$ is preserved in $O(1)$ time. 

We mark each non-preserved edge among them as \emph{shadow}. 
Because we can test whether an edge is preserved in $O(1)$ time, and since computing $e_r$ takes $O(\log n)$ time by Lemma~\ref{lemma:Finding connected subpath in Voronoi}, this can be done in total amortized $O(|R_q|\log n)$ time.
In addition, notice that if $h_r$ contains two adjacent vertices $u$ and $v$ such that the light edge of $u$ is a left edge while the light edge of $v$ is a right edge (or vice versa), then the edge $uv$ cannot be preserved; see Figure~\ref{fig:Bent Edge}.  
In this case, we say that $uv$ is a \emph{bent edge}. We want to mark all the bent edges in $\Vq$ as shadow, but first we need to identify them efficiently.

\begin{figure}[tb]
\centering
\includegraphics{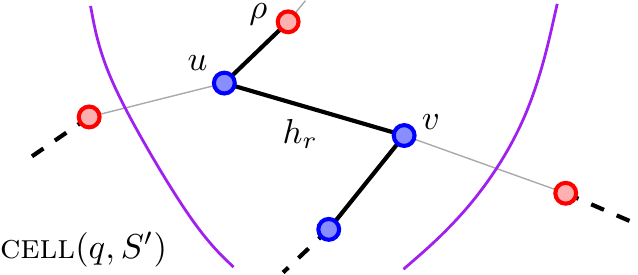}
%[width=1\textwidth]
\caption{\small Path $h_r$ contains two adjacent vertices $u$ and $v$ such that the light edge of $u$ is a left edge while the light edge of $v$ is a right edge. The edge $uv$ cannot be preserved.}
\label{fig:Bent Edge}
\end{figure}

Note that it suffices to find all the bent edges of $h_r$ for a given root $r\in R_q$, and then repeat this process for each root in $R_q$.
To find the bent edges in $h_r$, we further extend the grappa tree in such a way that the biased tree representing $h_r$ allows us to search for bent edges in $O(\log n)$ time. This extension is described as follows.
Recall that each leaf $s_v$ of a biased tree corresponds to a vertex $v$ of the heavy path and has a pointer to the unique light edge adjacent to $v$. Since each light edge is either left or right, we can extend the biased tree to allow us to search  in $O(\log n)$ time for the first two consecutive leaves where a change in direction occurs. From there, standard techniques allow us to find  the next change in direction in additional $O(\log n)$ time. Therefore, we can find all the bent edges of a heavy path $h_r$ in $O(\log n)$ time per bent edge.
After finding each bent edge in $h_r$, we mark it is as a shadow edge.

\begin{lemma}\label{lemma:Shadow is non-preserved}
An edge of $\Vq$ is a preserved edge if and only if it was not marked as a shadow edge.
\end{lemma}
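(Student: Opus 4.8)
The plan is to prove both directions of the equivalence, using the characterization of preserved edges established earlier. Recall that a preserved edge is one that lies on a preserved $\curve$-face, and (by the discussion after Definition~\ref{def:preserved}) a preserved edge is either an edge of $G$ that reappears as a $\curve$-edge, or a fleeq-edge adjacent to such a reappearing edge. We have marked an edge as \emph{shadow} in exactly two situations: (i) it is an edge incident to the vertex $w$ of some transition edge $e_r$ that fails the preservation test of Observation~\ref{obs:Testing reincarnation}, or (ii) it is a bent edge on some heavy path $h_r$. So the two directions amount to: every shadow edge is non-preserved, and every non-preserved edge of $\Vq$ gets marked by one of these two mechanisms.

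For the forward direction (shadow $\Rightarrow$ non-preserved), I would handle the two marking rules separately. Edges marked by rule (i) are non-preserved by construction, since we explicitly tested them via Observation~\ref{obs:Testing reincarnation}. For bent edges, I would argue directly from the geometry: if $u,v$ are adjacent on a heavy path and the light edge at $u$ lies on the left while the light edge at $v$ lies on the right (Figure~\ref{fig:Bent Edge}), then the face lying on one side of the edge $uv$ inside $\rr{q}{S'}$ contains the light edge of one of $u,v$ but not the other, so that face has a vertex of degree $\geq 3$ strictly inside $\curve$ on the boundary segment replacing $uv$; hence this face is not preserved, and by definition $uv$ is not a preserved edge. (More simply: for $uv$ to reappear, both of its neighbors along the two bounding faces would need to be fleeq-edges, which forces the light edges at $u$ and $v$ to be on the same side.)

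For the reverse direction (non-preserved $\Rightarrow$ marked as shadow), let $e = uv$ be a non-preserved edge of $\Vq$. The key case analysis is on where $e$ sits relative to the heavy-path decomposition. If $e$ is a light edge, then it is incident to the root $r$ of some heavy path $h_r$ with $r \in R_q$, and it is one of the edges incident to the endpoint $w$ of the transition edge examined in rule (i)---wait, more carefully: if $e$ is non-preserved and incident to a heavy-path root inside $\rr{q}{S'}$, it is tested and marked. If $e$ is a heavy edge lying on some $h_r$, then either $e$ is the transition edge $e_r$ itself or an edge adjacent to it (covered by rule (i)), or $e$ lies strictly inside the portion of $h_r$ contained in $\rr{q}{S'}$; in the latter case I claim $e$ must be bent, because a non-bent heavy edge strictly interior to $\rr{q}{S'}$ whose two neighbors on $h_r$ are also interior has both its flanking non-fleeq edges either reappearing or... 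The main obstacle, and the step I would spend the most care on, is exactly this: showing that a non-preserved heavy edge that is neither bent nor adjacent to a transition edge cannot exist---i.e., that the only way a heavy edge deep inside $\rr{q}{S'}$ fails to be preserved is by being bent. This requires relating the combinatorial notion of ``preserved'' (the two bounding faces keep the same size) to the purely structural notion of ``bent'' (a left/right light-edge alternation), and arguing that an unbent heavy edge with all its local neighborhood inside $\curve$ has both adjacent faces bounded on the far side by fleeq-edges, so it reappears.

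I would close by noting that the forest obtained after deleting all shadow edges has the property that every edge of each remaining component inside $\rr{q}{S'}$ is preserved, which is precisely what the subsequent relinking step needs; but that consequence is for the next lemma, so the proof of Lemma~\ref{lemma:Shadow is non-preserved} itself ends once both implications above are established.
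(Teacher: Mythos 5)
Your forward direction is fine and is in fact more explicit than the paper's, which simply asserts that only non-preserved edges are ever marked; your sketch of why a bent edge cannot be preserved (for $uv$ to reappear, the two far-side edges bounding its two incident faces would both have to be fleeq-edges, which forces the light edges at $u$ and $v$ to lie on the same side) is the right justification of that assertion.

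The reverse direction, however, has a genuine gap which you yourself flag and then try to close via a claim that is false. You assert that a non-preserved heavy edge $uv$ lying strictly inside $\rr{q}{S'}$ and not adjacent to a transition edge ``must be bent.'' The paper's proof treats exactly the complementary case and shows it does occur: if $uv$ is interior, non-preserved, and \emph{not} bent, then the light children of $u$ and $v$ lie on the same side; since $uv$ is not preserved, at least one of those light children --- say the light child $r'$ of $u$ --- must lie inside $\rr{q}{S'}$, for if both light edges crossed $\partial\rr{q}{S'}$ then $uv$ would reappear and hence be preserved. That $r'$ is then a heavy-path root in $R_q$, and the algorithm discovers the non-preserved edges around $u$ when it processes $r'$ under rule (i). So the missing ingredient is not a topological argument that unbent interior heavy edges are automatically preserved (they need not be); it is a witnessing argument showing that every such edge is caught by the transition-edge test for some descendant root in $R_q$. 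Without that step your reverse implication does not go through, and the fix you were aiming for cannot be made to work.
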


\begin{proof}
Since we only mark non-preserved edges as shadow, 
we know that if an edge is preserved, then it is not shadow.

Assume that there is a non-preserved edge $uv$ of $\Vq$ that is not marked as shadow. 
If $uv$ is a heavy edge, then it belongs to some heavy path $h_r$ for some $r\in R_q$. 
We know that $uv$ cannot be the transition edge of $h_r$ since it would have been shadowed when we tested whether it was preserved. 
Thus, $uv$ is completely contained in $\rr{q}{S'}$. 
We can also assume that $uv$ is not a bent edge, otherwise $uv$ would have been shadowed. 
Therefore, the light children of $u$ and $v$ are either both left or both right children, say left.
Since $uv$ is not preserved, either the light child of $u$ or the light child of $v$ must be inside $\rr{q}{S'}$.
Otherwise if both edges cross the boundary of $\rr{q}{S'}$, then $uv$ is preserved by definition.

Assume that $u$ has a light left child $r'$ that is inside $\rr{q}{S'}$. That is, $r'$ must be the root of some heavy path and hence belongs to $R_q$.
However, in this case we would have checked all the edges adjacent to $u$ while processing the root $r'\in R_q$.
Therefore, every edge that is non-shadow and intersects $\rr{q}{S'}$ is a preserved edge.
\end{proof}

\begin{corollary}
It holds that $\sigma = \Theta(\cost[\V, \rrb{q}{S'}])$.
\end{corollary}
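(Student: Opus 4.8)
The plan is to establish the two inequalities $\sigma = O(\cost[\V, \rrb{q}{S'}])$ and $\sigma = \Omega(\cost[\V, \rrb{q}{S'}])$ separately, where $\sigma$ denotes the number of shadow edges marked by the algorithm. The combinatorial machinery already developed does most of the work; the statement is essentially a bookkeeping corollary of Lemma~\ref{lemma:Shadow is non-preserved} together with Lemma~\ref{lemma:The cost of a flarb}.

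For the direction $\sigma = O(\cost[\V, \rrb{q}{S'}])$: by Lemma~\ref{lemma:Shadow is non-preserved}, every shadow edge is a non-preserved edge of $\Vq$, i.e.\ an edge of $\V$ that intersects $\rr{q}{S'}$ but is not preserved. First I would observe that each such edge is either a fleeq-edge of $\rrb{q}{S'}$ or an edge lying in $\bounded[\V, \rrb{q}{S'}]$ (an edge wholly enclosed by the curve), or an edge of a non-preserved $\rrb{q}{S'}$-face. Since $\V$ is a tree (Halin graph), there are no bounded faces inside the curve, so $|\bounded[\V, \rrb{q}{S'}]| = 0$; hence the non-preserved edges are exactly the fleeq-edges together with the non-fleeq edges incident to non-preserved $\curve$-faces. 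A counting argument bounds their number: there are $|\fleeq|$ fleeq-edges, and each non-preserved $\curve$-face contributes at most a bounded number of non-fleeq edges that we charge to it, with the number of non-preserved faces being $|\fleeq| - |\pres|$. This gives $\sigma = O(|\fleeq| - |\pres|)$, and by the lower bound in Lemma~\ref{lemma:The cost of a flarb}, $(|\fleeq| - |\pres|)/2 \le \cost[\V, \rrb{q}{S'}]$ (using $|\bounded| = 0$), so $\sigma = O(\cost[\V, \rrb{q}{S'}])$.

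For the direction $\sigma = \Omega(\cost[\V, \rrb{q}{S'}])$: here I would invoke the upper bound of Lemma~\ref{lemma:The cost of a flarb}, namely $\cost[\V, \rrb{q}{S'}] \le 4|\fleeq| + 3|\bounded| - 4|\pres| = 4|\fleeq| - 4|\pres|$ since $|\bounded| = 0$. It therefore suffices to show $|\fleeq| - |\pres| = O(\sigma)$, i.e.\ that the number of non-preserved $\curve$-faces (equivalently, non-preserved fleeq-edges) is bounded by a constant times the number of shadow edges. This follows because each non-preserved $\curve$-face has on its boundary at least one non-preserved non-fleeq edge of $\Vq$ (its shrinkage or augmentation forces a structural change to an internal edge), and by Lemma~\ref{lemma:Shadow is non-preserved} that edge is shadow; since each edge bounds at most two faces, $|\fleeq| - |\pres| \le 2\sigma + O(1)$, and combining with the displayed upper bound yields $\cost[\V, \rrb{q}{S'}] = O(\sigma)$.

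The main obstacle I anticipate is the careful accounting of which non-preserved edges get charged to which faces, so that no edge is charged more than a constant number of times and every non-preserved face receives a shadow edge it can claim — essentially re-deriving, in this tree setting, the face-to-edge charging already used inside the proof of Lemma~\ref{lemma:The cost of a flarb}. Once one notes that $|\bounded| = 0$ because $\V$ is a tree, both bounds reduce to linear relations among $|\fleeq|$, $|\pres|$, and $\sigma$, and the corollary drops out; so I would phrase the proof as a short two-paragraph derivation citing Lemma~\ref{lemma:Shadow is non-preserved} and the two sides of Lemma~\ref{lemma:The cost of a flarb}, rather than redoing the Euler-characteristic count from scratch.
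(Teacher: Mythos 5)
The paper does not give this corollary a self-contained combinatorial proof. Its justification is distributed over the rest of Section~5: the \emph{upper} bound $\cost[\V,\rrb{q}{S'}]=O(\sigma)$ is obtained by exhibiting the algorithm of the ``compressed tree'' and ``completing the flarb'' subsections, which performs the flarb with $O(\sigma)$ links and cuts; the \emph{lower} bound $\cost[\V,\rrb{q}{S'}]=\Omega(\sigma)$ is argued directly at the end of the section by asserting that every non-preserved edge must be cut and invoking Lemma~\ref{lemma:Shadow is non-preserved}. Your approach is genuinely different: you try to derive both bounds purely combinatorially from Lemma~\ref{lemma:The cost of a flarb} by relating $\sigma$ to $|\fleeq|-|\pres|$, which if carried through would make the corollary independent of the algorithm. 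That modularity is attractive, and the overall plan (exploit $|\bounded|=0$, then show $\sigma=\Theta(|\fleeq|-|\pres|)$) is a reasonable one.

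However, as written there are two real gaps in the charging. First, for $\sigma=O(|\fleeq|-|\pres|)$ you assert that ``each non-preserved $\curve$-face contributes at most a bounded number of non-fleeq edges,'' but a single shrinking face can have an unbounded number of interior edges of $\Vq$ on its boundary, so this per-face charging fails. The bound is actually true, but the right argument goes through global counting on the tree $\Vq$: it has $|\fleeq|$ leaves and hence $O(|\fleeq|)$ edges by Lemma~\ref{lemma:size of graphs with holes}; the (distinct) interior connecting edges of preserved faces account for $|\pres|$ interior edges out of only $|\fleeq|-O(1)$ total, and a fleeq-edge is non-preserved only if both incident $\curve$-faces are non-preserved, which caps the non-preserved fleeq-edges at $|\fleeq|-|\pres|$. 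Second, and more seriously, for $|\fleeq|-|\pres|=O(\sigma)$ you claim that ``each non-preserved $\curve$-face has on its boundary at least one non-preserved non-fleeq edge of $\Vq$.'' This is false for an \emph{augmented} face: its two fleeq-edges share a vertex of $\ins$ and it has \emph{no} interior $\Vq$-edge at all. The claim must be weakened to ``has at least one non-preserved edge (possibly a fleeq-edge),'' and even then one needs to observe that an augmented face with both neighbours preserved forces two preserved faces to share their interior connecting edge, which is precisely the degenerate case excluded in the proof of Lemma~\ref{lemma:The cost of a flarb}. Without naming and handling this degenerate case, the final inequality $|\fleeq|-|\pres|\le 2\sigma+O(1)$ does not follow.

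So: correct high-level idea and a genuinely different route, but the face-to-edge charging has to be redone --- shrinking faces break the first direction's per-face bound, and augmented faces break the second direction's ``non-fleeq edge'' claim.
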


Let $\sigma$ be the number of shadow edges of $\V$, which is equal to the number of non-preserved edges by Lemma~\ref{lemma:Shadow is non-preserved}.
The following relates the size of $R_q$ with the value of~$\sigma$.

\begin{lemma}\label{lemma:Shadow bound roots}
It holds that $|R_q| = O(\sigma \log n)$. 
\end{lemma}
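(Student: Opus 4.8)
The plan is to charge each root $r \in R_q$ to a shadow edge, allowing at most $O(\log n)$ roots to be charged to the same shadow edge. First I would observe that since the portion of $\V$ inside $\rr{q}{S'}$ is a connected subtree $\Vq$, and since each heavy path rooted at some $r \in R_q$ enters $\rr{q}{S'}$ exactly at $r$ (because $\rho$ is the root and the heavy-path decomposition is downward), the roots in $R_q$ are precisely the vertices of $\Vq$ that are the top endpoint of a light edge lying inside $\rr{q}{S'}$, together with the global root $\rho$. Equivalently, each $r \in R_q$ with $r \neq \rho$ is reached from its parent in $\Vq$ via a light edge. So it suffices to bound the number of light edges of $\Vq$.

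The key structural fact is the standard heavy-path property: on any root-to-leaf path in $\V$, there are at most $O(\log n)$ light edges, because each light edge at least halves the size of the subtree hanging below it. I would use this as follows. Consider the set $\sigma$ of shadow (non-preserved) edges; by Lemma~\ref{lemma:Shadow is non-preserved} these are exactly the non-preserved edges of $\Vq$, and by Lemma~\ref{lemma:The cost of a flarb} there are $\Omega(|\fleeq|)$ of them relative to the cost, but more simply: every leaf of the subtree $\Vq$ (i.e., every vertex of $\Vq$ incident to a transition edge or otherwise on the ``fringe'' adjacent to the fleeq) is incident to a non-preserved edge, and the non-preserved edges, being the complement of the preserved ones, disconnect $\Vq$ into components each of whose edges is preserved. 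I would argue that contracting all preserved edges of $\Vq$ yields a tree $\Vq^\ast$ whose edges are exactly the $\sigma$ shadow edges and whose size is therefore $O(\sigma)$. Each root $r \in R_q$ maps to a vertex of $\Vq^\ast$; two roots map to the same vertex of $\Vq^\ast$ only if they are connected by a path of preserved edges in $\Vq$. Along such a preserved path, the light edges hanging off it that lead to further roots of $R_q$ are light edges, and on any single root-to-node path there are only $O(\log n)$ light edges — so the number of roots mapping into a single node, or more carefully into a single ``preserved component'', is $O(\log n)$.

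The main obstacle, and the step to handle carefully, is making the charging argument precise: a preserved component of $\Vq$ can in principle be large, and I must show that the number of roots of $R_q$ lying on it is still $O(\log n)$ rather than proportional to its size. The resolution is that within one preserved component all edges are preserved, hence in particular no bent edges and no heavy-path transitions occur there in a way that spawns many new heavy-path roots inside $\rr{q}{S'}$; any root $r \in R_q$ on this component is the bottom of a light edge whose subtree lies in $\rr{q}{S'}$, and the nesting of these subtrees down a single heavy path forces the count to be logarithmic by the halving property. Combining, $|R_q| \le (\text{number of preserved components}) \cdot O(\log n) + O(1) = O(\sigma \log n)$, which is the claim.
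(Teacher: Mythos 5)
Your proposal takes a genuinely different route from the paper's. The paper builds a \emph{dependency graph} on $R_q$: each root $r$ points to the root $t$ of the heavy path $h_t$ containing $r$'s parent $p_r$, but only when $p_r$ is the in-region endpoint of the transition edge of $h_t$. Because each heavy path has a unique transition edge, this graph is a disjoint union of oriented paths; each dependency path has length $O(\log n)$ by the light-edge halving property, and each path's sink $r$ yields a light edge $rp_r$ that provably cannot be preserved (since $p_r$'s two heavy edges are then both interior, so $rp_r$ is neither a fleeq-edge nor a reappearing edge). Distinct sinks give distinct shadow edges, so $\sigma = \Omega(|R_q|/\log n)$. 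Your scheme instead contracts preserved edges of $\Vq$ into a tree $\Vq^{\ast}$ with $O(\sigma)$ nodes and tries to bound the roots per contracted node by $O(\log n)$. That is a legitimate alternative charging scheme, but as written it has a real gap and a couple of false claims.

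The gap: your sentence ``the nesting of these subtrees down a single heavy path forces the count to be logarithmic'' is not correct as stated, and also ``no heavy-path transitions occur there'' is false. A preserved component is a comb whose spine \emph{does} contain light edges---those light spine edges are precisely the edges whose lower endpoints are the roots of $R_q$ you are trying to count, so the spine spans many heavy paths. The correct argument is that the spine is a near-monotone downward path in the rooted tree (it has at most one apex), hence contains $O(\log n)$ light edges by the halving property. You also need to argue that the \emph{teeth} of the comb contribute no roots; this requires observing that teeth are fleeq-edges whose outer endpoints lie outside $\rr{q}{S'}$, so they never land in $R_q$. Both of these are fixable but are not justified in your write-up. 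Additionally, your claim that each such $r$ is ``the bottom of a light edge whose subtree lies in $\rr{q}{S'}$'' is incorrect---the subtree of $r$ need not lie in $\rr{q}{S'}$ at all; only $r$ itself is known to be inside. Finally, note that your argument leans on the comb structure of preserved components, which the paper only develops later (in Section~\ref{sec:Computing the flarb} under ``The compressed tree''); the paper's dependency-graph proof avoids any forward reference to that structure, which is one reason it is the cleaner route at this point in the exposition.
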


\begin{proof}
Given a root $r$ of $R_q$, let $p_r$ be the parent of $r$ and notice that the edge $r p_r$ is a light edge that is completely contained in $R_q$. Note that $p_r$ belongs to another heavy path $h_t$, for some $t\in R_q$. 
If $p_r$ is the endpoint of the transition edge of $h_t$ closest to the root, then we add a \emph{dependency pointer} from $r$ to $t$.
This produces a \emph{dependency graph} with vertex set $R_q$. Since there is only transition edge per heavy path, that the in-degree of each vertex in this dependency graph is one. Therefore, the dependency graph is a collection of (oriented)\emph{dependency paths}.

Since any path from a vertex to the root $\rho$ of $\V$ traverses $O(\log n)$ light edges, each dependency path has length $O(\log n)$.
Let $r\in R_q$ be the sink of a dependency path. 
Consider the light edge $r p_r$ and notice that it cannot be preserved, as $p_r$ is not incident to a transition edge. Therefore, we can charge this non-preserved edge to the dependency path with sink $r$. 
Since a non-preserved can be charged only once, we have that $\sigma$ is at least the number of dependency paths.
Finally, as each dependency path has length $O(\log n)$, there are at least $\Omega(|R_q|/\log n)$ of them.
Therefore $\sigma  = \Omega(|R_q| \log n)$, or equivalently, $|R_q| = O(\sigma \log n)$ which yields our result.
\end{proof}

\subsection{The compressed tree}

Let $\f$ be the forest obtained from $\Vq$ by removing all the shadow edges (this is just for analysis purposes, so far no cut has been performed).
Note that each connected component of $\f$ consists only of preserved edges that intersect $\rr{q}{S'}$. Thus, each component inside $\rr{q}{S'}$ is a comb, with a path as a \emph{spine} and each child of a spine vertex pointing to the same side; see Figure~\ref{fig:Compressing Comb}. 
Thus, we have right and left combs, depending on whether the children of the spine are left or right children. 

\begin{figure}[h!]
\centering
\includegraphics{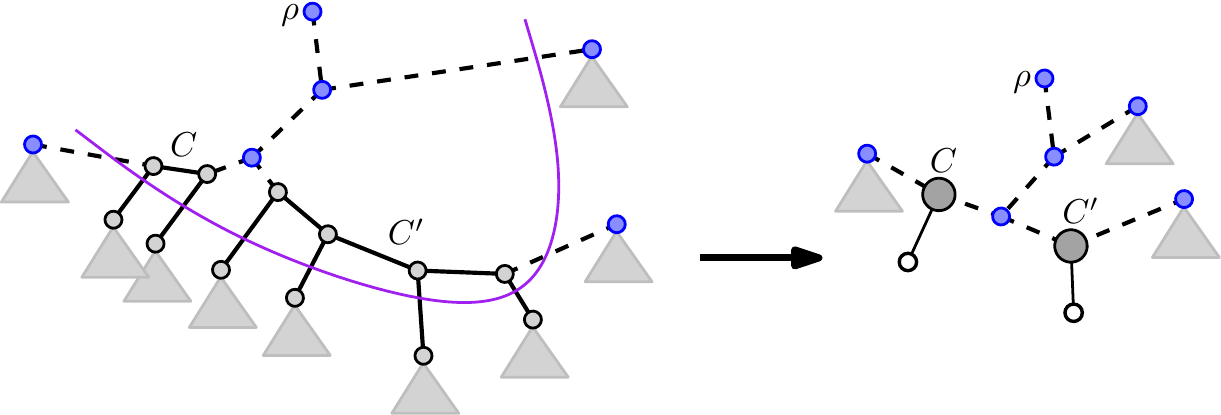}
%[width=1\textwidth]
\caption{\small Two combs of $\f$ that are compressed into super nodes with their respective dummy leaves. 
An Eulerian tour around the compressed tree provides us with the order in which the trees hanging outside of $\rr{q}{S'}$ should be attached. }
\label{fig:Compressing Comb}
\end{figure}

Our objective in the long term is to cut all the shadow edges and link the remaining components in the appropriate order to complete the flarb.
To this end, we would like to perform an Eulerian tour on the subtree $\Vq$ to find the order in which the subtrees of $\V\setminus \Vq$ that  hang from the leaves of $\Vq$ appear along this tour. 
However, this may be too expensive as we want to perform this in time proportional to the number of shadow edges and the size of $\Vq$ may be much larger.
To make this process efficient, we compress $\Vq$ by contracting each comb of $\f$ into a single super node.
By performing an Eulerian tour around this compressed tree, we obtain the order in which each component needs to be attached. We construct the compressed flarb and then we decompress as follows.

Note that each comb has exactly two shadow edges that connect it with the rest of the tree. 
Thus, we contract the entire component containing the comb into a single \emph{super node} and add a left or right dummy child to it depending on whether this comb was left or right, respectively; see Figure~\ref{fig:Compressing Comb}.
After the compression, the shadow edges together with the super nodes and the dummy vertices form a tree called the \emph{compressed tree} that has $O(\sigma)$ vertices and edges, where $\sigma$ is the total number of shadow edges. 

\begin{lemma}
We can obtain the compressed tree in $O(\sigma \log \sigma)$ time.
\end{lemma}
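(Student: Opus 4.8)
The plan is to build the compressed tree purely from constant‑time local information at the endpoints of the $\sigma$ shadow edges together with a single geometric sort of $O(\sigma)$ objects, and to avoid touching $\V$ at all (consistent with the fact that no cut has yet been performed); the sort will be the only superlinear step, so the bound $O(\sigma\log\sigma)$ will follow.

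\textbf{Locating the teeth and sorting.} Recall that every vertex $v$ of $\V$ knows its definers through the face‑markers of its (at most three) incident edges, so by the observation on definer circles we can decide in $O(1)$ time whether a vertex lies in $\rr{q}{S'}$, and hence whether an edge is a fleeq‑edge. First I would, for each shadow edge and each of its endpoints $v$ lying in $\rr{q}{S'}$ (a spine‑end of a comb of the forest $\f$), identify in $O(1)$ time the unique fleeq‑edge incident to $v$ — the \emph{tooth} of $v$ — and compute the point $x_v$ where $\curve$ crosses this tooth: if the tooth lies on the bisector of the two sites $a,b$ stored in its face‑markers, then $x_v$ is the point of that bisector equidistant from $q$, i.e.\ the circumcenter of $\{q,a,b\}$, computable in $O(1)$ time. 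There are $O(\sigma)$ such teeth in all. Since $q$ lies in the interior of its own Voronoi cell $\rr{q}{S'}$ and $\curve$ is the convex boundary of that cell, the cyclic order of the points $x_v$ around $q$ coincides with their cyclic order along $\curve$; I would sort the $x_v$'s by angle around $q$ in $O(\sigma\log\sigma)$ time. This is the dominant term.

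\textbf{Reconstructing the compressed tree.} Each comb of $\f$ is bounded by exactly two shadow edges, one at each end of its spine, and those two shadow edges are incident to two distinct teeth of the comb whose crossing points are the two endpoints of the arc that this comb contributes to $\curve$. Because $\V$ is planarly embedded and the portion of $\V$ inside $\rr{q}{S'}$ is a connected subtree, the interleaving around $\curve$ of comb‑arcs and shadow edges is a well‑nested, balanced‑parenthesis structure mirroring the compressed tree rooted at the comb containing the leaf $\rho$; in particular, the matching that pairs the two teeth of each comb is non‑crossing with respect to the cyclic order of the $x_v$'s. Hence one stack‑based scan of the sorted sequence of $x_v$'s matches up the two teeth of each comb, creates one super node per matched pair, and recovers the parent–child relations together with the cyclic order of children needed later for the Eulerian tour. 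For each comb the side on which its teeth hang — and thus whether its dummy child is a left or a right dummy leaf — is read off in $O(1)$ time from any one of its spine vertices, since all teeth of a comb hang from the same side of the spine. Finally, inserting the $\sigma$ shadow edges between the corresponding super nodes completes the compressed tree. The scan and the creation of the $O(\sigma)$ super nodes, dummy leaves and edges are all $O(\sigma)$ pointer operations, so the total running time is $O(\sigma\log\sigma)$.

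\textbf{Main obstacle.} The delicate point is the combinatorial claim used in the reconstruction step: that distinct combs occupy non‑crossing arcs of $\curve$, so that the two‑teeth matching is non‑crossing and the stack scan is valid. This is where I would argue carefully from the planarity of the embedding of $\V$, the connectedness of $\Vq$ inside $\rr{q}{S'}$, and the comb shape of each component of $\f$ — a spine with all teeth on one side, so that each comb meets $\curve$ in a single arc running between its two end teeth. Once this is established, the rest is the routine ``build a tree from a balanced parenthesization'' procedure plus one angular sort, giving the claimed $O(\sigma\log\sigma)$ bound.
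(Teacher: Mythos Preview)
Your approach is genuinely different from the paper's. The paper groups the shadow edges by their adjacent Voronoi face (hashing on the face-markers), sorts the shadow edges along each such face using the convex-hull order of the sites on the opposite side, and declares a comb wherever two consecutive shadow edges along a face fail to share a vertex. You instead compute one crossing point per spine-end and do a single angular sort around $q$. Both make a sort of $O(\sigma)$ items the bottleneck, so the target bound is the same.

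There is, however, a real gap in your argument. You assume that every shadow-edge endpoint inside $\rr{q}{S'}$ is a spine-end of a comb and therefore has a unique incident fleeq-edge (your ``tooth''). That fails: a vertex $v$ inside the cell may have \emph{all three} incident edges non-preserved, in which case $v$ is isolated in $\f$, has no tooth, and is a genuine degree-$3$ internal node of the compressed tree that your procedure never creates. (In fact one can check that a vertex inside the cell has either $0$, $1$, or $3$ shadow edges; having exactly two is impossible because a single preserved edge at $v$ forces a second preserved edge at $v$ via the preserved face it bounds. The $3$-shadow case is precisely what you miss.)

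A second issue is the ``balanced-parenthesis / stack scan'' step. The comb arcs along $\curve$ are pairwise \emph{disjoint}, not nested, so there is no nesting structure to read off; at best your scan pairs the two end-teeth of each comb, which you can already do locally once you know on which side of the tooth the shadow edge sits. The parent--child structure of the compressed tree is not encoded in the cyclic order of the $x_v$'s at all: it is carried directly by the shadow edges, each of which tells you which two super nodes (or isolated vertices) it joins. Your closing remark about ``inserting the $\sigma$ shadow edges between the corresponding super nodes'' is in fact the entire reconstruction; the stack scan is neither sufficient for it nor necessary.
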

\begin{proof}
Notice that each shadow edge is adjacent to two faces---its left face and its right face.
Recall that each face ibounds the Voronoi cell of some site in $S$ and that each shadow edge has two markers pointing to the sites defining its adjacent faces.
Using hashing, we can group the shadow edges that are adjacent to the same face in $O(\sigma)$ time.
Since preserved faces have no shadow edges on their boundary, we have at most $O(\sigma)$ groups.

Finally, we can sort the shadow edges adjacent to a given face along its boundary. 
To this end, we use the convex hull position of the sites defining the faces on the other side of each of these shadow edges. Computing this convex hull takes $O(\sigma \log \sigma)$ time. 
Once the shadow edges are sorted along a face, we can walk and check whether consecutive shadow edges are adjacent. If they are not, then the path between them consists only of preserved edges forming a comb; see Figure~\ref{fig:Compressing Comb}. Therefore, we can compress this comb and continue walking along the shadow edges. 
Since each preserved edge that reappears is adjacent to a face containing at least one shadow edge (namely the face that is not preserved), all the combs will be compressed during this procedure.
\end{proof}

The compressed tree is then a binary tree where each super node has degree three and each edge is a shadow edge.
We now perform an Eulerian tour around this compressed tree and retrieve the order in which the leaves of this tree are visited. Some leaves are dummy leaves and some of them are original leaves of $\Vq$; see Figure~\ref{fig:Compressing Comb}.

\subsection{Completing the flarb}
We now proceed to remove each of the shadow edges which results in a (compressed) forest with $O(\sigma)$ components.
Note that each of the original leaves of $\Vq$ was connected with its parent via a shadow edge and hence it lies now as a single component in the resulting forest. 
For each of these original leaves of $\Vq$, we create a new \emph{anchor} node and link it as the parent of this leaf. 
Moreover, there could be internal vertices that become isolated. In particular this will be the case of the root $\rho$. 
These vertices are deleted and ignored for the rest of the process. %; see Figure~\ref{fig:Cutting Shadow}. 
\begin{figure}[h!]
\centering
\includegraphics{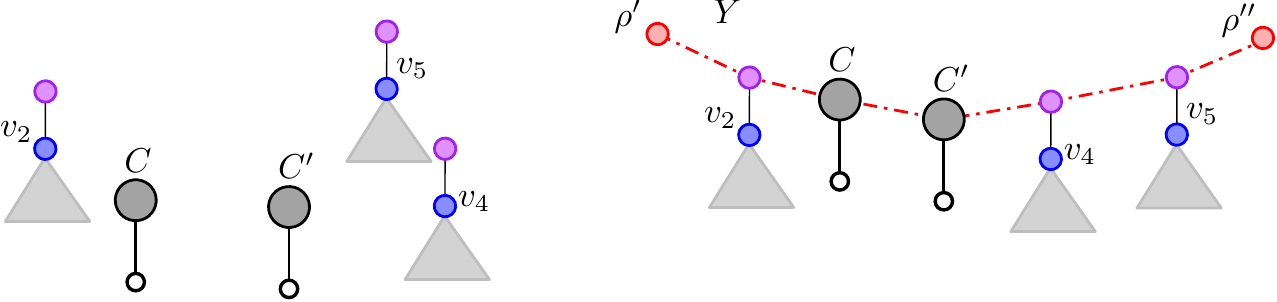}
%[width=1\textwidth]
\caption{\small Left: An anchor node is created for each isolated leaf of $\Vq$ and attached as its parent. Other isolated nodes are ignored. Right: A super comb is created connecting two new leaves $\rho'$ and $\rho''$ through a path. This path connects anchor and super nodes in the order retrieved by the Eulerian tour around the compressed tree.}
\label{fig:Linking after Cutting}
\end{figure}
To complete the flarb, we create two new nodes $\rho'$ and $\rho''$ which will be the two new leaves of the Voronoi diagram, one of them replacing $\rho$. Then, we construct a path with endpoints $\rho$ and $\rho'$ that connects the super nodes and the anchor nodes according to the traversal order of their leaves; see Figure~\ref{fig:Linking after Cutting}.
The resulting tree is a \emph{super comb} $Y$, where each vertex on the spine is either a super node or an anchor node, and all the leaves are either dummy leaves or original leaves of $\Vq$. 
Since we combined $O(\sigma)$ components into a tree, we need $O(\sigma)$ time.

We proceed now to decompress $Y$. To decompress a super node of $Y$ that corresponds to a comb, we consider the two neighbors of the super node in $Y$ and attach each of them to the ends of the spine of the comb. 
For an anchor node, we simply note that there is a component of $\V$ hanging from its leaf; see Figure~{\ref{fig:Decompression}}. 
In this way, we obtain all the edges that need to be linked. After the decompression, we end with the tree $\mathcal V(S')$ resulting from the flarb. 
Thus, the flarb operation of inserting $q$ can be implemented with $O(\sigma)$ link and cuts.

\begin{figure}[h!]
\centering
\includegraphics{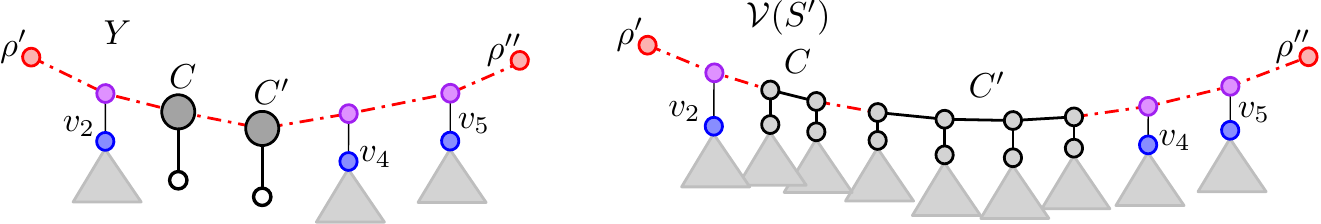}
\caption{\small The tree $\mathcal V(S')$ achieved after the decompression.}
\label{fig:Decompression}
\end{figure}

Recall that any optimal algorithm needs to perform a cut for each edge that is not preserved. 
Since each non-preserved edge is shadow by Lemma~\ref{lemma:Shadow is non-preserved}, the optimal algorithm needs to perform at least $\Omega(\sigma)$ operations. 
Therefore, our algorithm is optimal and computes the flarb using $\Theta(\sigma)$ link and cuts. 
Moreover, by Lemmas~\ref{lemma:Computing roots in polylog} and~\ref{lemma:Finding connected subpath in Voronoi} we can compute the flarb in $O(|R_q| \log^6 n + \sigma \log n)$ amortized time using $\Theta(\sigma)$ link and cuts. Since $|R_q| = O(\sigma \log n)$ by Lemma~\ref{lemma:Shadow bound roots}, we obtain the following.

\begin{theorem}\label{theorem:Cost of flarb}
The flarb operation of inserting $q$ can be implemented with $O(K)$ link and cuts, where $K$ is the cost of the flarb.
Moreover, it can be implemented in $O(K \log^7 n)$ amortized time.
\end{theorem}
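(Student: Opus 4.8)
The plan is to assemble the theorem from the lemmas already proved in the section, organized into two independent claims: first the combinatorial claim that $\Theta(\sigma)$ links and cuts suffice and are necessary (so $\sigma = \Theta(K)$), and second the running-time accounting. The combinatorial lower bound $K = \Omega(\sigma)$ is immediate: by Lemma~\ref{lemma:Shadow is non-preserved} every non-preserved edge is shadow, so there are exactly $\sigma$ non-preserved edges, and by the lower-bound half of Lemma~\ref{lemma:The cost of a flarb} any algorithm must cut at least one edge in every non-preserved face and hence $\Omega(\sigma)$ edges overall; conversely the explicit construction just described — cutting all $\sigma$ shadow edges, creating anchor nodes, building the super comb $Y$, and decompressing — performs $O(\sigma)$ links and cuts and produces $\calV(S')$. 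Combining gives $\sigma = \Theta(K)$, which is exactly the content of the corollary stated just before Lemma~\ref{lemma:Shadow bound roots}.

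Next I would total the running time of each phase of the algorithm and express everything in terms of $\sigma$ (equivalently $K$). The phases are: everting $\calV$ at $\rho$ in $O(\log n)$ time; computing $R_q$ in $O(|R_q|\log^6 n)$ amortized time by Lemma~\ref{lemma:Computing roots in polylog}; for each $r \in R_q$ computing the transition edge of $h_r$ in $O(\log n)$ time by Lemma~\ref{lemma:Finding connected subpath in Voronoi} and testing the $O(1)$ edges incident to its near endpoint for preservation in $O(1)$ time each by Observation~\ref{obs:Testing reincarnation} (really Observation~\ref{obs:New site in circles} plus the reappearance test), totalling $O(|R_q|\log n)$; finding all bent edges using the extended biased trees in $O(\log n)$ per bent edge, i.e. $O(\sigma \log n)$; building the compressed tree in $O(\sigma\log\sigma)$ time by the lemma on the compressed tree; performing the Eulerian tour and constructing and decompressing the super comb $Y$ in $O(\sigma)$ time; and the $\Theta(\sigma)$ link/cut calls on the grappa tree, each $O(\log n)$ by Theorem~\ref{thm:grappa trees}, for $O(\sigma\log n)$. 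Summing, the bottleneck is the $O(|R_q|\log^6 n)$ term from the Chan data structure, so the amortized cost is $O(|R_q|\log^6 n + \sigma\log\sigma + \sigma\log n)$.

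Finally I would substitute the bound $|R_q| = O(\sigma\log n)$ from Lemma~\ref{lemma:Shadow bound roots}, which turns the dominant term into $O(\sigma\log^7 n)$ and absorbs all lower-order terms, giving total amortized time $O(\sigma \log^7 n)$. Since $\sigma = \Theta(K)$ by the first part, this is $O(K\log^7 n)$ amortized, and the $\Theta(\sigma) = \Theta(K)$ link/cut count completes the proof.

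The step I expect to be the main obstacle — or at least the one requiring the most care in the write-up — is making the charging in the amortized accounting airtight: the $O(\log^3 n)$ insertion and $O(\log^6 n)$ deletion costs of the Chan structure in Lemma~\ref{lemma:Chan data structure} are amortized, and one must be careful that the roots inserted/deleted during one flarb are correctly amortized against that flarb (each root of $\calV$ is inserted once when its heavy path is created and deleted once, so the per-flarb charge is bounded by $O(|R_q|)$ such operations plus the re-insertions forced by the query-and-remove loop of Lemma~\ref{lemma:Computing roots in polylog}). Provided this bookkeeping is handled cleanly, the rest is a routine sum of the per-phase bounds.
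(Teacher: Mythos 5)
Your proposal matches the paper's own argument, which appears as running text immediately before the theorem rather than in a proof environment: the combinatorial claim is assembled from Lemma~\ref{lemma:Shadow is non-preserved} together with the explicit $O(\sigma)$-link/cut construction (cut shadow edges, build and decompress the super comb), and the time bound is the per-phase sum $O(|R_q|\log^6 n + \sigma\log n)$ from Lemmas~\ref{lemma:Computing roots in polylog} and~\ref{lemma:Finding connected subpath in Voronoi}, reduced to $O(\sigma\log^7 n)$ via $|R_q| = O(\sigma\log n)$ from Lemma~\ref{lemma:Shadow bound roots}. The bookkeeping concern you raise at the end --- charging the amortized insert/delete costs of the Chan structure in Lemma~\ref{lemma:Chan data structure} as heavy-path roots appear and disappear across flarbs --- is a real subtlety that the paper glosses over entirely, so your write-up is, if anything, more careful than the original on this point.
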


\bibliographystyle{abbrvnat}
\bibliography{flarb}

\end{document}